\newtheoremstyle{mytheoremstyle}  
  {\topsep}   
  {\topsep} 
  {}   
  {}        
  {\bfseries\upshape} 
  {.}        
  {0.5em}  
  {}      
\theoremstyle{mytheoremstyle}  
\newtheorem{theorem}{Theorem}
\newtheorem{definition}{Definition}
\newtheorem{lemma}{Lemma}
\newtheorem{remark}{Remark}
\newcommand{\leadv}{{{\scaleto{\mathrm{L}}{3.5pt}}}} 
\newcommand{\followv}{{{\scaleto{\mathrm{F}}{3.5pt}}}}
\newcommand{\human}{{{\scaleto{\mathrm{H}}{3.5pt}}}}
\newcommand{\av}{{{\scaleto{\mathrm{AV}}{3.5pt}}}}
\newcommand{\egov}{{{\scaleto{\mathrm{E}}{3.5pt}}}}
\begin{document}

\title{Shared Control for Vehicle Lane-Changing with Uncertain Driver Behaviors}

\author{
    \IEEEauthorblockN{Jiamin Wu, Chenguang Zhao, Huan Yu\textsuperscript{*}} \\
    \IEEEauthorblockA{Thrust of Intelligent Transportation, The Hong Kong University of Science and Technology (Guangzhou)\\ Nansha, Guangzhou, 511400, Guangdong, China
    \thanks{\textsuperscript{*} Corresponding author. Email: huanyu@ust.hk}}
}

\maketitle

\begin{abstract}
Lane changes are common yet challenging driving maneuvers that require continuous decision-making and dynamic interaction with surrounding vehicles. Relying solely on human drivers for lane-changing can lead to traffic disturbances due to the stochastic nature of human behavior and its variability under different task demands. Such uncertainties may significantly degrade traffic string stability, which is critical for suppressing disturbance propagation and ensuring smooth merging of the lane-changing vehicles. This paper presents a human–automation shared lane-changing control framework that preserves driver authority while allowing automated assistance to achieve stable maneuvers in the presence of driver's behavioral uncertainty. Human driving behavior is modeled as a Markov jump process with transitions driven by task difficulty, providing a tractable representation of stochastic state switching. Based on this model, we first design a nominal stabilizing controller that guarantees stochastic $\mathcal{L}_2$ string stability under imperfect mode estimation. To further balance performance and automated effort, we then develop a Minimal Intervention Controller (MIC) that retains acceptable stability while limiting automation. Simulations using lane-changing data from the NGSIM dataset verify that the nominal controller reduces speed perturbations and shorten lane-changing time, while the MIC further reduces automated effort and enhances comfort but with moderate stability and efficiency loss. Validations on the TGSIM dataset with SAE Level 2 vehicles show that the MIC enables earlier lane changes than Level 2 control while preserving driver authority with a slight stability compromise. These findings highlight the potential of shared control strategies to balance stability, efficiency, and driver acceptance.
\end{abstract} 

\begin{IEEEkeywords}
Lane-changing control, shared control, stochastic $\mathcal{L}_2$ string stability, minimal intervention
\end{IEEEkeywords}

\section{Introduction}
Lane changing is a critical traffic maneuver that often serves as a contributory factor in roadway crashes and flow disruptions~\cite{ali2021clacd,MONTEIRO2023104138}. It is a coupled lateral–longitudinal action in which lateral movement across lane boundaries realizes the geometric crossing, while longitudinal adjustment of speed and gap determines whether the maneuver can be executed within the available time window \cite{nilsson2017lane,luo2016dynamic,zhao2024safe,PAN2016403}. A successful execution therefore depends on an acceptable spatiotemporal window, compatibility of speeds and gaps at the crossing instant, and coordination with neighboring vehicles without increasing negative traffic impacts \cite{bevly2016lane,hu2019trajectory}. Human drivers exhibit substantial variability in throttle and brake inputs, which can magnify speed disturbances. In contrast, fully automated control can suppress such fluctuations through reliable state feedback from surrounding vehicles, but it still faces constraints of trust, acceptance, and responsibility in real deployment \cite{fang2023humanmachine,ZHANG2019207,DU2019428}. These realities make human–automation shared control a practical transitional approach in which automation provides suitable assistance without undermining driver authority and engagement. 

Considering the determinant role of longitudinal gap and speed regulation in lane changing and the fact that throttle and brake constitute the most intuitive control channel for drivers, longitudinal shared control becomes a natural choice to both secure maneuver feasibility and preserve driver authority \cite{guo2015sharedmerge}. In this paper, we focus on a typical lane-changing scenario that involves a leader vehicle, a follower vehicle and a lane-changing ego-vehicle. The ego-vehicle must adjust its longitudinal gap under constrained spatiotemporal conditions, as illustrated in Fig.~\ref{fig:scenario}. We develop a shared control framework to ensure maneuver feasibility, preserve driver authority, and attenuate disturbance propagation under stochastic driver behavior.

\subsection{Lane-changing control on automated vehicles}
Automated vehicles (AVs) have already demonstrated the capability to execute automatic lane changes. Commercial vehicles equipped with high-level automation, such as SAE Level 4 and above, can autonomously plan trajectories and complete lane changes within defined operational domains without human intervention \cite{YU2021103101}. Compared to human-driven vehicles (HVs), AVs can perform these maneuvers more smoothly and safely in many traffic scenarios, which improves traffic efficiency \cite{YANG2018228}. To realize such benefits in a systematic manner, many control strategies have been developed, typically organized as a pipeline that perceives surrounding traffic, trajectories generation and tracking by controllers such as linear quadratic regulator (LQR) or model predictive control (MPC) \cite{nilsson2017lane,liu2018dynamic,WANG201573}. In addition, hierarchical approaches integrate high-level behavioral decisions, such as target vehicle or lane selection, with low-level trajectory generation, enabling more efficient and smoother lane-changing maneuvers \cite{Tran2019mpc,li2020automatic}. Despite these advances, single-vehicle formulations often treat other vehicles as moving obstacles, which limits their adaptability in interactive and flow-level traffic.

To overcome these limitations, some studies have developed control strategies that explicitly incorporate surrounding vehicles into the control system \cite{Wang2023Interaction,bhattacharyya2023automated}. For example, an adaptive mixed-integer MPC scheme has been proposed to estimate the cost weights of neighboring vehicles online via inverse optimal control and jointly optimize the trajectories of both the AV and the interacting human-driven vehicle \cite{bhattacharyya2023automated}, with the strategy further validated in controlled experiments \cite{Bhattacharyya2025LaneChange}. Another representative approach~\cite{zhao2024safe} formulates a mandatory lane-changing task within a unified framework that integrates control barrier functions and signal temporal logic, jointly handling safety constraints and timing requirements in the presence of surrounding vehicles, thereby achieving safe and efficient maneuvers. 
Some approaches model the lane-changing problem as a game in which the AV acts as a strategic player that optimizes its acceleration and timing while anticipating the reactions of nearby vehicles, thereby achieving cooperative gap negotiation \cite{YU2018gametheory,wei2022game,Zhang2024Stackelberg}. Meanwhile, deep reinforcement learning (DRL) methods formulate lane changing as a sequential decision-making problem, where the AV learns coupled longitudinal–lateral actions and searches traffic gaps to maximize long-term rewards combining safety, efficiency, and comfort \cite{Triest2020Learning,Saxena2020DrivinginDense,Ye2020RLlanechanging,Li2023DeepReinforcementLearning}, but these data-driven approaches usually lack theoretical guarantees. Overall, these system-level and interactive approaches highlight the potential of fully automated controllers but also face practical limitations, particularly in terms of limited user trust, which relates to user's concerns about loss of agency and unpredictability of automated decisions. Therefore, even though high-level automation assumes full control, users often disengage and take over when uncertainties arise \cite{DU2019428}.

Consequently, the gap between the technical feasibility of automation and its real-world acceptance highlights the need for paradigms that retain human involvement. Motivated by this perspective, shared control frameworks have been proposed, in which automation assists drivers in accomplishing driving tasks while preserving certain level of driver authority. 

\subsection{Shared control methods and applications}
Shared control is defined as a framework in which human and automated control inputs are combined to jointly influence system behavior, enabling the system to leverage both the effectiveness of automation and the adaptability of human~\cite{li2021indirect}. Unlike purely manual or fully automated driving, shared control requires two essential designs: how to model the human's behavior, which is inherently variable and stochastic in most cases; and how to fuse human and automated control commands into a final input \cite{abbink2012haptic}. Existing shared control methods can be broadly categorized into two approaches: switching control and blending control. 
In switching control, the authority is switched between human and automated system based on contexts or thresholds~\cite{LI2020switchingcontrol,HUANG2025105262}. The main drawback is frequent authority transfer, which may degrade both human satisfaction and system performance~\cite{kim2012switchingcons,GAO2024104491}.
In blending control, human and automated inputs are combined in real time~\cite{Dragan2013APF,JIANG2020safeteleoperation,huang2024safetyaware}. Compared with switching control, blending control is more flexible and thus has become the dominant paradigm in shared control applications.

Many research works have applied shared control design for automated systems to ensure safer and more efficient operations during various driving tasks, such as lane-keeping, path-following, and car-following~\cite{XING2021humancenteredcollaborative,erlien2016sharedsteering,tsoi2010lanekeeping,Sentouh2019driverautomation,bian2020lanekeeping}. 
In lane-keeping assistance, the control goal is to maintain a reference trajectory while reducing workload and preventing unsafe departures. One representative design combines lane-keeping and conflict management under a supervisory control framework that guarantees closed-loop stability and reduces conflicts between driver and automation \cite{Sentouh2019driverautomation}. The work in~\cite{bian2020lanekeeping} proposes a learning-based predictive controller with switchable assistance modes, in which an oracle compensates for unmodeled vehicle dynamics while reducing the burden of frequent steering corrections to improve driver acceptance. These methods demonstrate the effectiveness of shared control for enhancing safety in lane keeping. But since lane-keeping is relatively structured with limited degrees of freedom, they mainly emphasize stability and human-automation conflict mitigation rather than explicit driver modeling. 
In path-following tasks, where the reference may involve more complex geometries and uncertainties than simple lane keeping, shared control strategies incorporate explicit descriptions of human behavior. Some approaches adaptively reallocate authority in real time based on workload indicators or control errors, thereby balancing human and automation contributions to enhance trajectory tracking, improve stability, and reduce conflicts \cite{li2022adaptive,shi2022human,huang2024safetyaware}. Other studies emphasize interaction modeling in which the driver and automation are treated as coupled decision-makers, and adaptive adjustment mechanisms are designed to preserve driver authority while enhancing comfort and maintaining safety \cite{han2022adaptivesteering,guo2024game}. In addition, probabilistic intent prediction has been explored, where Gaussian process models of human behavior are fused with automated MPC signals to improve path-tracking and heading performance while reducing risk \cite{lang2023shared}. Overall, these path-following studies highlight the importance of modeling human variability but remain limited to short-term or single-vehicle assistance. To broaden this scope, recent work has incorporated shared control into longitudinal car-following by modeling drivers with the Intelligent Driver Model (IDM) and blending their actions with feedback controllers through a smooth arbitration function. The goal is to suppress stop-and-go oscillations and guarantee string stability. Microscopic simulations further demonstrate that even partial penetration of such shared control strategies can effectively attenuate traffic waves \cite{salvato2024stopandgo}.

Researchers have also investigated shared control in more complex maneuvers such as lane changing, where strict spatiotemporal constraints impose higher cognitive demands on drivers, making shared strategies particularly valuable. In~\cite{yang2018stackelberg}, lane changing is formulated as a Stackelberg game, with the driver as leader and the automation as follower. The automation follows the driver's intended path to support maneuver execution, thereby reducing steering effort and enabling safe overtaking or collision avoidance. 
Moving beyond simple authority negotiation, the human-like shared driving strategy in~\cite{liu2023Shared} explicitly models human perception. It builds risk-assessment models and critical-gap estimators from naturalistic data to allocate authority, and combines them with a cooperative formulation to jointly stabilize lateral and longitudinal maneuvers under time-varying speeds.
Likewise, a collaborative game-theoretic framework models driver and automation across both dimensions under a Nash-equilibrium formulation, dynamically reallocating authority based on risk fields to enhance tracking performance, reduce driver burden, and maintain stability across different speeds \cite{chen2025Collaborative}.
To further formalize lane-changing constraints, \cite{chen2025dynamic} formulates the task as an optimization-based shared control problem, embedding control barrier and Lyapunov functions in a quadratic program to achieve dynamic authority allocation between driver and automation. This design ensures that the automation intervenes only when lane-changing constraints are at risk, while minimizing conflicts and preserving driver authority.

Overall, existing methods realize lane-changing control mainly through steering coordination or authority allocation, typically focusing on lateral control but overlooking longitudinal gap regulation that critically shapes follower-vehicle responses and system-level string stability. In addition, their modeling of human behavior is often simplified, without fully capturing how drivers adapt their actions to dynamic traffic conditions. In this paper, we propose a human-automation shared control framework for lane changing, where human behavior is formulated as a mode-switching stochastic process and the assistive controller is designed to guarantee system-level string stability under uncertain driver inputs.
\begin{figure}
    \centering
    \includegraphics[width=0.8\linewidth]{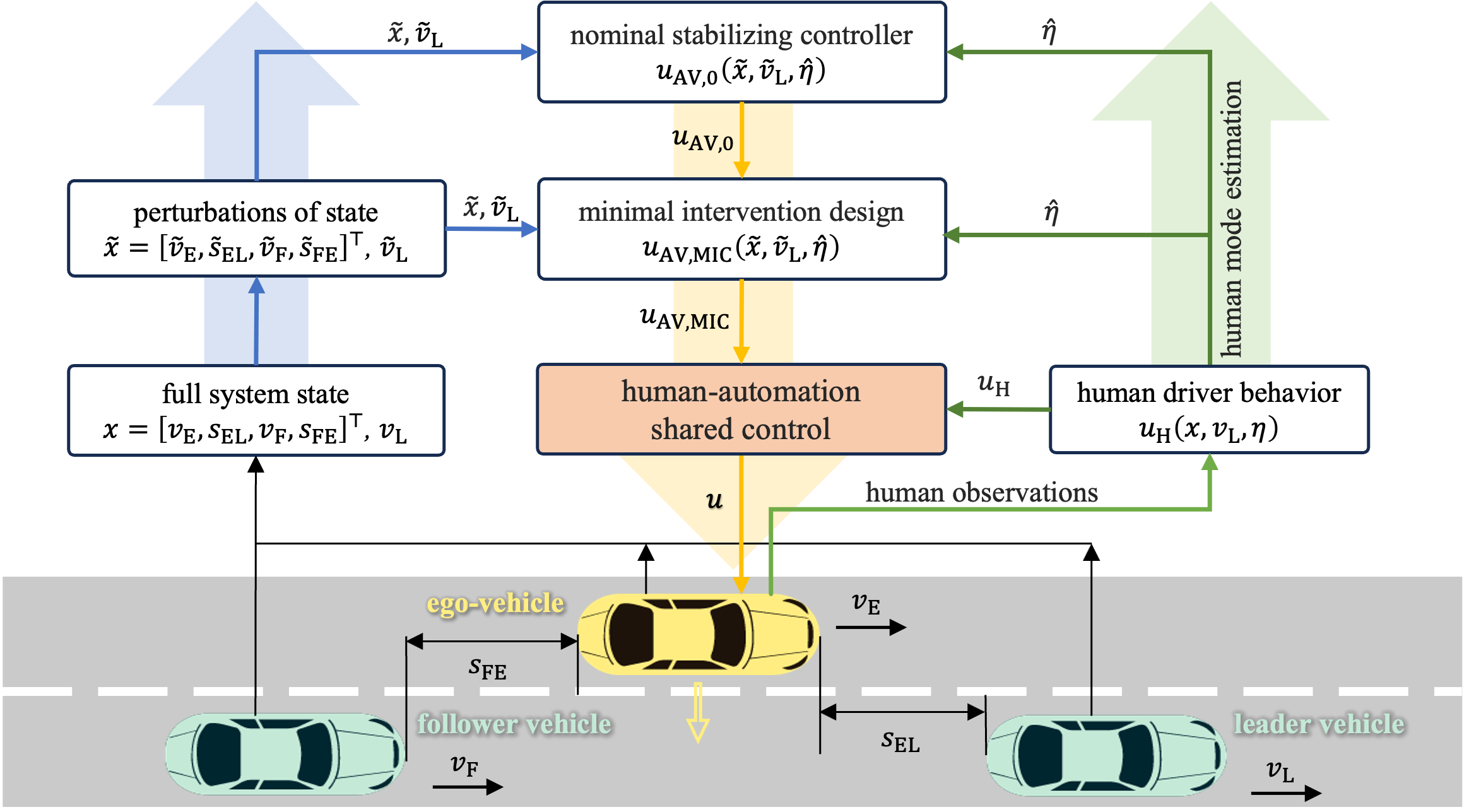}
    \caption{The ego-vehicle adjusts its speed and gap, then changes to the target lane when it has a suitable gap with both the leader vehicle and the follower vehicle. The entire figure illustrates the proposed shared control framework: the human driver generates inputs based on observations, while the assistive controller provides assistance through either the nominal stabilizing controller or the minimal intervention design, both of which adapt to the estimated human mode $\hat{\eta}$. The combined input $u=u_\human+u_\av$ drives the ego-vehicle, ensuring disturbance attenuation and maintaining driver authority under stochastic behavior transitions.}
    \label{fig:scenario}
\end{figure}

\subsection{Contributions}
In this paper, we consider a lane-changing scenario shown in Fig.~\ref{fig:scenario} that contains two HVs (a leader vehicle and a follower vehicle) and an ego-vehicle. By leveraging the information from onboard sensors, we design an assistive controller for the ego-vehicle to stabilize the motion of the three-vehicle system during lane changes, while ensuring the authority of the human driver. We first model human's driving mode as a joint Markov process. Then, we design an assistive controller that adapts to the observed driver mode in real time. We use tractable linear matrix inequalities (LMIs) to guarantee stochastic $\mathcal{L}_2$ string stability under imperfect mode estimation, and explicitly limit automated effort via a minimal intervention principle when possible. Finally, we conduct numerical simulations to validate our assistive controller and analyze its impact on stability, lane-changing success, and driver authority.

The main contributions of this paper lies in proposing a shared control framework for longitudinal dynamics in lane-changing that guarantees stochastic $\mathcal{L}_2$ string stability of the traffic system under driver behavior uncertainty. In particular, we adopt a partially observed Markov representation of human longitudinal actions, which accounts for stochastic switching induced by traffic perception and thus provides a richer description of driver variability compared with deterministic models. We then design an assistive controller that maintains stability under partial observability. Furthermore, we incorporate a minimal intervention mechanism that explicitly penalizes automated effort to preserve driver authority. This yields a mode-dependent controller that constrains automation involvement while ensuring stability. Compared with human-only and automation-only baselines, the proposed design mitigates follower vehicle speed oscillations and shortens maneuver duration while keeping automated intervention bounded.

The remainder of this paper is organized as follows. We formulate the model of the vehicle lane-changing task and human driving behavior in Section~\ref{sec:formulation}. We establish an assistive controller and address stochastic $\mathcal{L}_2$ string stability analysis of the system in Section~\ref{sec:stab}. In Section~\ref{sec:MIC}, we propose a minimal intervention controller designs to guarantee the balance between automated effort and system performance. We conduct numerical simulations in Section~\ref{sec:sim} to validate and analyze the designed controller.

\section{Problem formulation}
\label{sec:formulation}
In this section, we formulate the lane-changing scenario and describe the system dynamics, and then model the human driving behavior.

\subsection{Vehicle motion modeling and system descriptions}
We consider the lane-changing process in a two-lane scenario as illustrated in Fig.~\ref{fig:scenario}, where a leader vehicle and a follower vehicle travel on the same lane and a lane-changing ego-vehicle on the adjacent lane aims to merge into the gap between them. A high-level planner determines the insertion pair and specifies a finite crossing-time window, during which the maneuver must be completed. Since lateral tracking is largely geometric once a gap is available, we focus on longitudinal regulation, which determines whether the maneuver is acceptable and whether disturbances are amplified in the traffic stream. In this work, we model the ego-vehicle's longitudinal input as a blending of human input and automated assistance. This formulation allows explicit quantification of both contributions while regulating speeds and gaps during lane changes.

The ego-vehicle measures the speed and position of the leader vehicle and follower vehicle through onboard or roadside sensing. The state variable of the control system $x(t) \in \mathbb{R}^4$ is defined as:
\begin{equation}
    x = [v_\egov,s_{\egov\leadv},v_\followv,s_{\followv\egov}]^\top,
    \label{eq:state}
\end{equation}
where $v_\egov(t) \in \mathbb{R}$ and $v_\followv(t) \in \mathbb{R}$ denote the speeds of the ego-vehicle and follower vehicle, respectively; $s_{\egov\leadv}(t) \in \mathbb{R}$ is the gap between the ego-vehicle and the leader vehicle, and $s_{\followv\egov}(t) \in \mathbb{R}$ is the gap between the follower vehicle and the ego-vehicle. The leader vehicle's speed $v_\leadv(t) \in \mathbb{R}$ is treated as an external disturbance to the system. The gap between the follower vehicle and leader vehicle, denoted by $s_{\followv\leadv}(t)\in \mathbb{R}$, equals the sum of $s_{\followv\egov}(t)$ and $s_{\egov\leadv}(t)$. The longitudinal dynamics of the follower vehicle are described by:
\begin{align}
    &\dot{s}_{\followv\egov} = v_\egov-v_\followv, \\
    &\dot{s}_{\followv\leadv} = v_\leadv - v_\followv,\\
    &\dot{v}_\followv = f_\followv(x,v_\leadv).\label{eq:f_f}
\end{align}
The function $f_\followv: \mathbb{R}^4\times \mathbb{R} \to \mathbb{R}$ models the follower vehicle's acceleration as a function of the system state $x$ and the leader vehicle's speed $v_\leadv$. We leave $f_\followv$ unspecified and allow it to represent general models.
The control input $u(t) \in \mathbb{R}$ controls ego-vehicle's speed, and the longitudinal dynamics of ego-vehicle are:
\begin{align}
  &\dot{s}_{\egov\leadv}= v_\leadv- v_\egov, \\
  &\dot{v}_\egov = u.   
\end{align}
The overall system dynamics are:
\begin{align}
    \dot{x} = f(x,v_\leadv)+g(x)u, 
\end{align}
\begin{align}
    f(x,v_\leadv) = \begin{bmatrix}
        0 \\ v_\leadv - v_{\egov} \\ f_\followv(x,v_\leadv) \\ v_{\egov} - v_{\followv}
    \end{bmatrix}, \quad  g(x) = \begin{bmatrix}
        1 \\ 0 \\ 0 \\0
    \end{bmatrix}.
\end{align}

We adopt a shared control scheme in which the control input $u(t)$ consists of two parts: a human control input $u_\human(t) \in \mathbb{R}$ and an automated system input $u_\av(t) \in \mathbb{R}$:
\begin{equation}
    u = f_u(u_\human,u_\av),
\end{equation}
where $f_u: \mathbb{R}^2 \to \mathbb{R}$ defines the combination strategy between the human and automated control inputs. The human input $u_\human(t)$ reflects the driver's decision-making process and is modeled by a known function to be introduced later. The assistive input $u_\av(t)$ is designed to assist human decisions and mitigate disturbances in the system. In this work, we adopt an additive combination:
\begin{align}
    u = u_\human+u_\av,
\end{align}
where the human maintains the control authority via $u_\human(t)$, and the automated system $u_\av(t)$ assists to improve overall performance.

We define the system output as 
\begin{align}
    z = \mathcal{H}(x,u),
\end{align}
where the output function $\mathcal{H}$ collects physically meaningful performance variables in the traffic context. In this study, we will first select the follower vehicle's speed as the performance output for stabilization, and later augment it with a weighted automated effort term for minimal intervention design. Such a choice reflects the quantities that determine disturbance attenuation and human authority in lane-changing system. The explicit form of $\mathcal{H}(\cdot)$ will be specified in the subsequent sections.

\subsection{Human behavior modeling}
Deterministic car-following models can approximate human longitudinal actions but fail to capture the variability arising from cognitive factors such as workload or traffic perception \cite{saifuzzaman2015revisiting,li2017estimation}. The variability becomes especially pronounced in high-demand tasks like lane changing, where previous studies have shown that driver behavior is better represented as stochastic switching among distinct modes rather than continuous deterministic dynamics \cite{wang2017drivingstyleanalysis,huainingwu2022stochasticstability}. To this end, we employ a continuous-time hidden Markov model (CT-HMM), which provides a principled framework for describing stochastic transitions among behavior modes over time \cite{LIU2023stabilitysemimarkov}.

In the lane-changing, we categorize human behavior modes into \textit{low task difficulty} and \textit{high task difficulty}, reflecting the instantaneous challenge of the driving task relative to the driver's cognitive capacity. The task difficulty may fluctuate due to both internal psychological factors and reactions to the traffic environment. We formulate these two cases by a mode index $\eta(t) \in \mathcal{N} = \{1,2\}$. To reflect human's underlying cognitive factors, the dynamics of $\eta(t)$ are modeled as a time-homogeneous Markov process governed by transition rates $\Lambda = [\lambda_{ij}]\in\mathbb{R}^{2\times 2}$.

However, since the true human behavior mode $\eta(t)$ cannot be perfectly observed in practice, we treat $\eta(t)$ as a hidden mode and introduce an observed mode $\hat{\eta}(t)$ taking values in $\mathcal{M} = \{1,2\}$. To capture the coupled evolution of the hidden and observed modes, we define a continuous-time hidden Markov model $Z(t) = (\eta(t),\hat{\eta}(t))$, which is a time-homogeneous Markov process defined over the joint space $\mathcal{N} \times \mathcal{M}$ and characterized by transition rates $\nu_{(i,k)(j,l)}$. The transition probabilities follow \cite{stadtmann2017h_2}:
\begin{align}
    &P(Z(t+h) = (j,l)|Z(t) = (i,k))  
    =  \begin{cases}
        \nu_{(i,k)(j,l)}h+o(h),&\quad (j,l) \neq (i,k) \\
        1+\nu_{(i,k)(i,k)}h+o(h),&\quad (j,l) = (i,k).
    \end{cases} 
\end{align}
where $h$ is a small time step, and $o(h)$ is a higher-order term with $o(h)/h \to 0$ as $h\to0$. The transition rates $\nu_{(i,k)(j,l)}$:
\begin{align}
    \nu_{(i,k)(j,l)} = \begin{cases}
        \alpha_{jl}^k \lambda_{ij},& j\neq i, l\in \mathcal{M}, \\
        q_{kl}^i,&j= i, l\neq k,\\
        \lambda_{ii} + q_{kk}^i, &j= i, l=k,
    \end{cases}
\end{align}
where $\sum_{l\in \mathcal{M}} \alpha_{jl}^k = 1$, $\lambda_{ij}\ge0$ for all $i\neq j$, $q_{kl}^i\ge 0$ for all $l \neq k$, and $\lambda_{ii} = -\sum_{j \ne i} \lambda_{ij}$, $q_{kk}^i = -\sum_{l \ne k} q_{kl}^i$. Here, $\alpha_{jl}^k$ captures how likely the controller is to misclassify the new human mode upon a mode switch, while $q_{kl}^i$ models spontaneous estimation updates. 
We define the invariant set of $Z(t)$ as $\mathcal{V} \subseteq \mathcal{N} \times \mathcal{M}$. The joint process of $Z(t)$ enables the automated control system to reason about both the driver's actual behavior and the estimation uncertainty in a stochastic framework, which is crucial for controller design under imperfect observations.

To model the human behavior under different task conditions, we define the human control input as a mode-dependent acceleration function:
\begin{equation}
u_\human = f_\human(x,v_\leadv, \eta),
\end{equation}
where $f_\human: \mathbb{R}^4 \times \mathbb{R} \times \mathcal{N} \to \mathbb{R}$ is specified by the behavior mode $\eta(t)$. For instance, under different task difficulty levels, the driver may adopt distinct sensitivity gains or desired speeds. By assigning separate parameter sets to each mode, this formulation captures the variability in human behavior across low and high task difficulty states.

\section{Stabilization of assistive controller}\label{sec:stab}
In Section~\ref{sec:formulation}, we have formulated the control system and modeled human control input as $u_\human$. In this section, we design a nominal stabilizing controller $u_\av$ that provides assistance to human driver.

\subsection{Linearization}
We consider small perturbations around the equilibrium states. 
We denote the equilibrium speed as $v^*$. 
The equilibrium gap between the follower vehicle and ego-vehicle $s_{\followv\egov}^*$ and equilibrium gap between the ego-vehicle and leader vehicle $s_{\egov\leadv}^*$ are decided as $f_{\followv}(v^*,s_{\egov\leadv}^*,v^*,s_{\followv\egov}^*,v^*)=0$, and $f_{\human}(v^*,s_{\egov\leadv}^*,v^*,s_{\followv\egov}^*,v^*,\eta)=0$. We assume that the equilibrium gaps are independent of the human behavior mode, and depend only on the equilibrium speed. 

We denote small perturbations around the equilibrium states as: 
\begin{align}
   \tilde{x} &= \begin{bmatrix}
       \tilde{v}_\egov  & \tilde{s}_{\egov\leadv}  & \tilde{v}_\followv  & \tilde{s}_{\followv\egov}    
   \end{bmatrix}^{\top} \nonumber\\
   &= \begin{bmatrix}
       v_\egov - v^* & s_{\egov\leadv} - s_{\egov\leadv}^* & v_\followv - v^* & s_{\followv\egov} - s_{\followv\egov}^* 
   \end{bmatrix}^{\top}.
\end{align}
The human control input $u_\human = f_\human(x, v_\leadv, \eta)$ is linearized around $(x^*,v^*)$ and is dependent on $\eta$. Thus, $u_\human(\eta)$ is written in a compact form as:
\begin{align}
    u_{\human}(\eta) = K_{\human}(\eta)\tilde{x} + D_{\human}(\eta) \tilde{v}_\leadv,
    \label{eq:uh}
\end{align}
with 
\begin{align}
    K_{\human}(\eta) &= \begin{bmatrix}
        \frac{\partial f_{\human}(\eta)}{\partial v_{\egov}} & \frac{\partial f_{\human}(\eta)}{\partial s_{\egov\leadv}} &  \frac{\partial f_{\human}(\eta)}{\partial v_{\followv}} & \frac{\partial f_{\human}(\eta)}{\partial s_{\followv\egov}} \end{bmatrix}, \\
    D_{\human}(\eta) &= \frac{\partial f_{\human}(\eta)}{\partial v_\leadv}, 
\end{align}
where $\tilde v_{\leadv} = v_{\leadv} - v^*$. Likewise, the dynamics of the follower vehicle $f_{\followv}$ are linearized around $(x^*,v^*)$. Substituting $u = u_\human + u_\av$ into the system yields the following linearized dynamics:
\begin{align}
    \dot{\tilde{x}} &= [A+ BK_{\human}(\eta)]\tilde{x}  +Bu_{\av} + [D + BD_{\human}(\eta)] \tilde{v}_\leadv, 
    \label{eq:system}
\end{align}
with
\begin{align}
    A = \begin{bmatrix}
        0 & 0 &0 &0  \\
        -1 & 0 & 0 & 0   \\
        \frac{\partial f_{\followv}}{\partial v_{\egov}}  & \frac{\partial f_{\followv}}{\partial s_{\egov\leadv}}  & \frac{\partial f_{\followv}}{\partial v_{\followv}} & \frac{\partial f_{\followv}}{\partial s_{\followv\egov}}    \\
        1 & 0 & -1 & 0 
    \end{bmatrix}, 
    B = \begin{bmatrix}
        1  \\
        0  \\
        0 \\
        0  
    \end{bmatrix}, D = \begin{bmatrix}
        0 \\ 1 \\ \frac{\partial f_{\followv}}{\partial v_\leadv} \\0 
    \end{bmatrix}.
\end{align}
We note that  $A$, $B$, and $D$ are constant system matrices independent of the behavior mode $\eta$. We assume that $(A, B)$ is stabilizable and $(A,C)$ is detectable so that no unstable dynamics are unobservable from the performance output.

The control goal is to stabilize the follower vehicle's motion, so we define the system output as the follower vehicle's speed perturbation:
\begin{align}
    \tilde{z} = C \tilde{x},
\end{align}
where $C = \begin{bmatrix} 0 & 0 & 1 & 0 \end{bmatrix}$.

Building on the joint modeling of $\eta(t)$ and $\hat{\eta}(t)$, we design the assistive controller based on the observed mode $\hat{\eta}(t)$, while accounting for possible mismatch with the true human mode. The AV assistance control law is:
\begin{align}
     u_{\av}(\hat{\eta}) &= K_{\av}(\hat{\eta})\tilde{x} + D_{\av}(\hat{\eta}) \tilde{v}_\leadv \label{eq:ua},
\end{align}
where $K_{\av}(\hat{\eta})$ and $D_{\av}(\hat{\eta})$ are controller gains to be designed.

To summarize, from equations \eqref{eq:uh}, \eqref{eq:system} and \eqref{eq:ua}, the closed-loop dynamics are:
\begin{align}
    \dot{\tilde{x}} &= [A+ BK_{\human}(\eta)+BK_{\av}(\hat{\eta})]\tilde{x}  + [D + BD_{\human}(\eta) + BD_{\av}(\hat{\eta})] \tilde{v}_\leadv.
    \label{eq:system2}
\end{align}
Since the human driver's behavior mode $\eta$ is formulated by a discrete finite space $\mathcal{N}$, for simplified notation, we denote $K_{\human}(\eta)=K_{\human,i} =\begin{bmatrix}\frac{\partial f_{\human,i}}{\partial v_{\egov}} & \frac{\partial f_{\human,i}}{\partial s_{\egov\leadv}} &  \frac{\partial f_{\human,i}}{\partial v_{\followv}} & \frac{\partial f_{\human,i}}{\partial s_{\followv\egov}} \end{bmatrix}$, $D_{\human}(\eta)=D_{\human,i} = \frac{\partial f_{\human,i}}{\partial v_\leadv}$, for $\eta=i \in \mathcal{N}$. Similarly, we have $K_{\av}(\hat{\eta})=K_{\av,k}$ and $D_{\av}(\hat{\eta})=D_{\av,k}$, when $\hat{\eta}=k \in \mathcal{M}$. The linearized control system~\eqref{eq:system2} is therefore a continuous-time Markov jump system with imperfect observations.

\subsection{Stochastic $\mathcal{L}_2$ string stability}
In the lane-changing scenario, the leader vehicle acts as an external disturbance and fluctuations in its velocity will affect the behaviors of following ego-vehicle and follower vehicle. Meanwhile, the human driver of ego-vehicle may switch behavior across different driving modes. This introduces stochasticity into the system dynamics. As a result, classical string stability analysis becomes insufficient to characterize the propagation of speed perturbations under stochasticity. To guide the design of the assistive controller under such uncertainty, we adopt the stochastic $\mathcal{L}_2$ string stability, which evaluates whether the propagation of upstream disturbances can be attenuated in expectation. 
\begin{definition}[Stochastic $\mathcal{L}_2$ string stability \cite{Li2019stringstability}]\label{df:df1}
Consider the lane-changing system in \eqref{eq:system2}, the system is stochastic $\mathcal{L}_2$ string stable, 
if there exist class $\mathcal{K}$ functions $\varphi$ and $\psi$ such that, for any initial state $\tilde{x}(0) \in \mathbb{R}^4$ and any $\tilde{v}_\leadv \in \mathcal{L}_2$:
\begin{align}
    \mathbb{E} \left\{ \| \tilde{z}\| _{\mathcal{L}_2} \right\} \leq \varphi \left(\mathbb{E} \left\{ \| \tilde{v}_\leadv \| _{\mathcal{L}_2} \right\} \right) + \psi \left(\|\tilde{x}(0)\|\right),
\end{align}
where the expectation $\mathbb{E}$ is taken with respect to the joint Markov process $Z(t)$.
\end{definition}

\begin{lemma}\cite{Li2019stringstability} \label{lem:L2}
     For the lane-changing system~\eqref{eq:system2}, when the initial condition is $\tilde{x}(0)=0$, it is stochastically $\mathcal{L}_2$ string stable, if and only if there exists $\gamma\le 1$ such that:
\begin{align}
\mathbb{E} \left\{ \int_{0}^{\infty} \tilde{z}^\top(s)  \tilde{z}(s) ds \right\} \leq \gamma^2 \mathbb{E} \left\{ \int_{0}^{\infty} \tilde{v}_\leadv^\top(s)  \tilde{v}_\leadv(s) ds \right\}.
\end{align}
\end{lemma}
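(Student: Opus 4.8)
The plan is to prove the two directions of the equivalence after first reducing the statement using the two structural features of the setting: the zero initial condition $\tilde{x}(0)=0$ and the fact that the disturbance $\tilde{v}_\leadv$ is a fixed deterministic $\mathcal{L}_2$ signal, so that only the mode process $Z(t)$ carries randomness. Since $\psi$ is class $\mathcal{K}$ we have $\psi(\|\tilde{x}(0)\|)=\psi(0)=0$, so the additive offset in Definition~\ref{df:df1} disappears. Moreover, because $\tilde{v}_\leadv$ is nonrandom, $\mathbb{E}\{\|\tilde{v}_\leadv\|_{\mathcal{L}_2}\}=\|\tilde{v}_\leadv\|_{\mathcal{L}_2}$ and $\mathbb{E}\{\|\tilde{v}_\leadv\|_{\mathcal{L}_2}^2\}=\|\tilde{v}_\leadv\|_{\mathcal{L}_2}^2$. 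After these reductions, Definition~\ref{df:df1} reads $\mathbb{E}\{\|\tilde{z}\|_{\mathcal{L}_2}\}\le \varphi(\|\tilde{v}_\leadv\|_{\mathcal{L}_2})$, while the claimed inequality reads $\mathbb{E}\{\|\tilde{z}\|_{\mathcal{L}_2}^2\}\le \gamma^2\|\tilde{v}_\leadv\|_{\mathcal{L}_2}^2$ with $\gamma\le 1$.

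For the sufficiency direction, where the quadratic bound implies string stability, I would apply Jensen's inequality to the concave map $r\mapsto \sqrt{r}$, which yields $\mathbb{E}\{\|\tilde{z}\|_{\mathcal{L}_2}\}\le (\mathbb{E}\{\|\tilde{z}\|_{\mathcal{L}_2}^2\})^{1/2}\le \gamma\,\|\tilde{v}_\leadv\|_{\mathcal{L}_2}\le \|\tilde{v}_\leadv\|_{\mathcal{L}_2}$. Choosing the class $\mathcal{K}$ function $\varphi(r)=r$ (together with any class $\mathcal{K}$ function $\psi$) then recovers exactly the bound in Definition~\ref{df:df1}, so the system is stochastically $\mathcal{L}_2$ string stable.

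The necessity direction carries the real content, and the key is the linearity of the closed-loop map from $\tilde{v}_\leadv$ to $\tilde{z}$. For every realization of the Markov path, the system \eqref{eq:system2} is linear in the exogenous input, so its solution operator is positively homogeneous: replacing $\tilde{v}_\leadv$ by $c\tilde{v}_\leadv$ replaces $\tilde{z}$ by $c\tilde{z}$ pathwise, whence $\mathbb{E}\{\|\tilde{z}\|_{\mathcal{L}_2}^2\}$ is homogeneous of degree two in the input. I would exploit this to upgrade the general class $\mathcal{K}$ bound to a linear one: evaluating $\mathbb{E}\{\|\tilde{z}\|_{\mathcal{L}_2}\}\le \varphi(\|\tilde{v}_\leadv\|_{\mathcal{L}_2})$ along the scaled inputs $c\tilde{v}_\leadv$ and using homogeneity gives $\mathbb{E}\{\|\tilde{z}\|_{\mathcal{L}_2}\}\le \inf_{c>0}\varphi(c\|\tilde{v}_\leadv\|_{\mathcal{L}_2})/c$, which a change of variables shows to be genuinely linear in $\|\tilde{v}_\leadv\|_{\mathcal{L}_2}$. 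I would then define the stochastic mean-square $\mathcal{L}_2$ gain $\gamma_\ast$ as the supremum of $(\mathbb{E}\{\|\tilde{z}\|_{\mathcal{L}_2}^2\})^{1/2}/\|\tilde{v}_\leadv\|_{\mathcal{L}_2}$ over nonzero inputs, and argue that $\gamma_\ast$ is finite and at most one; finiteness follows from mean-square stability of the closed loop, so that $\mathbb{E}\{\|\tilde{z}\|_{\mathcal{L}_2}^2\}$ is a bounded positive quadratic form in $\tilde{v}_\leadv$, and the attenuation requirement forces $\gamma_\ast\le 1$. Taking $\gamma=\gamma_\ast$ yields the claimed inequality.

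The main obstacle I anticipate is precisely the passage between the first moment $\mathbb{E}\{\|\tilde{z}\|_{\mathcal{L}_2}\}$ appearing in Definition~\ref{df:df1} and the second moment $\mathbb{E}\{\|\tilde{z}\|_{\mathcal{L}_2}^2\}$ appearing in the lemma, since a class $\mathcal{K}$ bound on the mean of the output energy does not, for a generic random variable, control its mean square. Resolving this requires leaning on the linear Markov-jump structure, using homogeneity to reduce $\varphi$ to a linear gain and mean-square stability of \eqref{eq:system2} to guarantee that the second moment is finite and comparable, so that both conditions collapse to the single statement that the mean-square $\mathcal{L}_2$ gain is at most one. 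I would therefore make the mean-square interpretation of the stochastic $\mathcal{L}_2$ norm explicit at the outset, as it is what lets the homogeneity argument close the equivalence cleanly.
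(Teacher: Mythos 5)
Your sufficiency direction is correct, and it is the only direction the paper ever actually uses: Jensen's inequality for the concave map $r\mapsto\sqrt{r}$ gives $\mathbb{E}\{\|\tilde z\|_{\mathcal{L}_2}\}\le(\mathbb{E}\{\|\tilde z\|_{\mathcal{L}_2}^2\})^{1/2}\le\gamma\,\|\tilde v_\leadv\|_{\mathcal{L}_2}$, and taking $\varphi(r)=r$ closes Definition~\ref{df:df1} for $\tilde x(0)=0$. Be aware, however, that the paper never proves Lemma~\ref{lem:L2}: it is imported from \cite{Li2019stringstability}, and Appendix~\ref{app:proof1} only verifies the lemma's hypothesis (the energy inequality) for the synthesized closed loop in order to invoke it. So the comparison here can only be about whether your argument stands on its own.

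In the necessity direction it does not, and the failure sits exactly at the two points you flag as the main obstacle but do not actually resolve. First, finiteness of the second moment: you define $\gamma_\ast$ through $\mathbb{E}\{\|\tilde z\|_{\mathcal{L}_2}^2\}$ and justify its finiteness by ``mean-square stability of the closed loop,'' but that property is nowhere given --- Definition~\ref{df:df1} is a first-moment condition, and for Markov jump linear systems first-moment bounds do not imply second-moment bounds. Concretely, if the chain exits an unstable mode at rate $\lambda$ while the pathwise gain accumulated in that mode grows like $e^{at}$, then for $\lambda/2<a<\lambda$ the expected gain is finite (so a class-$\mathcal{K}$ first-moment bound holds) while the expected squared gain diverges; first-moment and mean-square stability of jump linear systems are genuinely inequivalent, so this step is circular. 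Second, the unit bound: your homogeneity argument correctly reduces the class-$\mathcal{K}$ bound to a linear one with slope $\kappa=\inf_{s>0}\varphi(s)/s$, but nothing forces $\kappa\le 1$, since Definition~\ref{df:df1} only asks for \emph{some} class-$\mathcal{K}$ function and $\varphi(r)=2r$ qualifies. A closed loop with no jumps and deterministic $\mathcal{L}_2$ gain equal to $2$ satisfies Definition~\ref{df:df1} yet admits no $\gamma\le1$, so the ``only if'' direction is false under the literal reading of the definition; your phrase ``the attenuation requirement forces $\gamma_\ast\le1$'' appeals to a requirement the definition does not contain. The equivalence in Lemma~\ref{lem:L2} is really the definitional convention of \cite{Li2019stringstability}, where $\mathcal{L}_2$ string stability is defined through the unit-gain energy inequality itself; it cannot be derived from Definition~\ref{df:df1} as stated, which is presumably why the paper cites it rather than proving it.
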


For the system with only human control, Lemma \ref{lem:L2} may not hold, since unstable human behavior can amplify disturbances. Therefore, it is necessary to design an assistive controller to achieve stability criterion in Lemma \ref{lem:L2}. The following theorem establishes a set of linear matrix inequalities (LMIs) that guarantees stochastic $\mathcal{L}_2$ string stability through appropriate controller gain design.

\begin{theorem}\label{thm:thm1}
Consider the closed-loop system~\eqref{eq:system2} under the joint process $Z(t) = (\eta(t), \hat{\eta}(t)) \in \mathcal{N} \times \mathcal{M}$ with joint transition rates $\nu_{(i,k)(j,l)}$. For each $k \in \mathcal{M}$, the control gains:
\begin{align}
    K_{\av,k}= V_k G_{k}^{-1}, \quad D_{\av,k} = L_k,
\end{align}
render the system stochastically $\mathcal{L}_2$ string stable, where there exist matrices $X_{ik}>0$, $V_k$, $L_k$, and non-singular $G_k$ for $i\in\mathcal{N}$ and $k\in\mathcal{M}$, and a scalar $\varepsilon>0$, satisfying the following LMIs\footnote{For a square matrix $A$, the operator $\mathrm{Her}(A) = A + A^\top$.}:
\begin{align}\label{eq:LMI0}
    \begin{bmatrix}
        \nu_{(i,k)(i,k)}X_{ik} & \Phi_{ik} & 0 &X_{ik} & X_{ik}\Pi_{ik} \\
        \Phi_{ik}^\top & -\gamma^2I & 0 &0 & 0\\
        0 & 0 & -I & 0 & 0\\
        X_{ik}^\top & 0 &0 & 0 & 0\\
        \Pi_{ik}^\top X_{ik}^\top &0 & 0 & 0 & -\mathcal{D}_{ik}
    \end{bmatrix} + \mathrm{Her}\left(\begin{bmatrix}
         \Omega_{ik}\\
        0 \\
        CG_k \\
        -G_k \\
        0
    \end{bmatrix}\begin{bmatrix}
        \varepsilon I\\
        0\\
        0\\
        I\\
        0
    \end{bmatrix}^\top \right)<0
\end{align}
where
\begin{align}
    \Omega_{ik} = &(A+BK_{\human,i})G_k +BV_k \label{eq:LMI_con1}\\
    \Phi_{ik} =& D+BD_{\human,i}+BL_k \label{eq:LMI_con2}\\
    \mathcal{V}_{(i,k)} =& \{ (j,l)\in \mathcal{V}; (j,l) \ne (i,k)\text{ and } \nu_{(i,k)(j,l)} \ne 0\} \nonumber \\
     =& \{r_{(i,k)}(1),\ldots,r_{(i,k)}(\tau_{(i,k)});r_{(i,k)}(\iota)\in \mathcal{V},\iota  = 1,\ldots, \tau_{(i,k)}\} \label{eq:LMI_con3}\\
    \Pi_{ik} =& \begin{bmatrix}
        \sqrt{\nu_{(i,k)r_{(i,k)}(1)}}I & \ldots & \sqrt{\nu_{(i,k)r_{(i,k)}(\tau_{(i,k)})}}I
    \end{bmatrix} \label{eq:LMI_con4}\\
    \mathcal{D}_{ik}=& \mathrm{diag}(X_{r_{(i,k)}(1)},\ldots,X_{r_{(i,k)}(\tau_{(i,k)})}) \label{eq:LMI_con5}
\end{align}
\end{theorem}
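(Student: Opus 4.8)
The plan is to certify the $\mathcal{L}_2$-gain bound of Lemma~\ref{lem:L2} with a mode-dependent stochastic Lyapunov function and then convexify the resulting synthesis condition into the stated LMIs. Writing the closed-loop matrices compactly as $\bar{A}_{ik} = A + BK_{\human,i} + BK_{\av,k}$ and $\bar{D}_{ik} = D + BD_{\human,i} + BD_{\av,k}$, the system~\eqref{eq:system2} reads $\dot{\tilde{x}} = \bar{A}_{ik}\tilde{x} + \bar{D}_{ik}\tilde{v}_\leadv$ on the joint mode $(i,k)$. I would take $V(\tilde{x},(i,k)) = \tilde{x}^\top P_{ik}\tilde{x}$ with $P_{ik}>0$ and apply the weak infinitesimal generator of $Z(t)$ (Dynkin's formula), which produces the drift term $\tilde{x}^\top(\bar{A}_{ik}^\top P_{ik} + P_{ik}\bar{A}_{ik})\tilde{x} + 2\tilde{x}^\top P_{ik}\bar{D}_{ik}\tilde{v}_\leadv$ together with the jump term $\sum_{(j,l)}\nu_{(i,k)(j,l)}\,\tilde{x}^\top P_{jl}\tilde{x}$. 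Imposing the dissipation inequality $\mathcal{L}V + \tilde{z}^\top\tilde{z} - \gamma^2\tilde{v}_\leadv^\top\tilde{v}_\leadv < 0$ on every $(i,k)\in\mathcal{V}$, then integrating over $[0,\infty)$, taking expectation with respect to $Z(t)$, and using $V\ge 0$ with $\tilde{x}(0)=0$, yields exactly $\mathbb{E}\int_0^\infty \tilde{z}^\top\tilde{z}\,ds \le \gamma^2\,\mathbb{E}\int_0^\infty \tilde{v}_\leadv^\top\tilde{v}_\leadv\,ds$, so $\gamma\le 1$ delivers stochastic $\mathcal{L}_2$ string stability through Lemma~\ref{lem:L2} and Definition~\ref{df:df1}.

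Collecting the dissipation condition into one matrix inequality gives a bounded-real-type inequality whose $(1,1)$ block carries $\mathrm{Her}(P_{ik}\bar{A}_{ik})$, the jump weights $\sum_{(j,l)}\nu_{(i,k)(j,l)}P_{jl}$, and the output penalty $C^\top C$, with $P_{ik}\bar{D}_{ik}$ against the disturbance block. The obstruction is that this is bilinear: the product $P_{ik}\bar{A}_{ik}$ contains $P_{ik}BK_{\av,k}$, and one cannot linearize it by the usual substitution $Y_{ik}=K_{\av,k}P_{ik}^{-1}$, because the assistive law reacts only to the observed mode $\hat{\eta}=k$, so $K_{\av,k}$ must not depend on the hidden index $i$, whereas the Lyapunov matrix $P_{ik}$ genuinely depends on both. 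This partial-observability constraint is the main difficulty and is what prevents a textbook change of variables from closing the synthesis.

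To resolve it, I would first congruence-transform into the inverse variables $X_{ik}=P_{ik}^{-1}>0$; this is what turns the drift into a right-multiplied form $\bar{A}_{ik}X_{ik}$, sends the diagonal jump entry to $\nu_{(i,k)(i,k)}X_{ik}$, and renders the off-diagonal jump sum as $\sum_{(j,l)\in\mathcal{V}_{(i,k)}}\nu_{(i,k)(j,l)}X_{ik}X_{jl}^{-1}X_{ik}$, which is precisely the block recovered by Schur complement against $\mathcal{D}_{ik}=\mathrm{diag}(X_{r_{(i,k)}(\iota)})$ through the square-rooted weighting row $\Pi_{ik}$; the output $C^\top C$ and the disturbance term are likewise peeled off by Schur complements against the $-I$ and $-\gamma^2I$ blocks. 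I would then introduce an $i$-independent slack matrix $G_k$ via a dilation (descriptor/Finsler-type) argument with the multiplier scaling encoded by the outer factor $[\varepsilon I\;\;0\;\;0\;\;I\;\;0]^\top$ in~\eqref{eq:LMI0}. This replaces the Lyapunov-coupled product by $\Omega_{ik}=\bar{A}_{ik}G_k=(A+BK_{\human,i})G_k+BV_k$ with the linearizing substitutions $V_k = K_{\av,k}G_k$, $L_k = D_{\av,k}$ and $\Phi_{ik}=\bar{D}_{ik}$, all manifestly free of $i$; the diagonal block $-\mathrm{Her}(G_k)$ forces $G_k+G_k^\top>0$, so $G_k$ is nonsingular and the recovery $K_{\av,k}=V_kG_k^{-1}$, $D_{\av,k}=L_k$ is well defined and depends on $k$ alone.

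To finish, I would verify that feasibility of~\eqref{eq:LMI0} implies the dissipation condition by eliminating the slack block with the projection (elimination) lemma: the projected inequality coincides with the $X_{ik}$-congruenced bounded-real inequality, and reversing the congruence returns the generator dissipation condition, whence stability. I expect the delicate bookkeeping to lie in (i) matching the $\varepsilon$-weighted cross terms generated by $\mathrm{Her}(\cdot)$ against the required $\mathrm{Her}(\bar{A}_{ik}X_{ik})$ after projection, and (ii) checking that the Schur reconstruction of the jump contributions reproduces $\sum_{(j,l)}\nu_{(i,k)(j,l)}P_{jl}$ with correct signs over the invariant set $\mathcal{V}$, with the convention $\lambda_{ii}=-\sum_{j\ne i}\lambda_{ij}$ and $q_{kk}^i=-\sum_{l\ne k}q_{kl}^i$ keeping the diagonal rate $\nu_{(i,k)(i,k)}$ negative. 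The slack variable $G_k$ is the indispensable device here, not a mere convenience, precisely because the controller is constrained to the observed mode.
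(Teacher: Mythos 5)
Your proposal is correct and follows essentially the same route as the paper's proof: a mode-dependent quadratic stochastic Lyapunov function with the weak infinitesimal generator, the dissipation inequality $\mathcal{L}V + \tilde{z}^\top\tilde{z} - \gamma^2\tilde{v}_\leadv^\top\tilde{v}_\leadv < 0$ integrated in expectation to invoke Lemma~\ref{lem:L2}, Finsler/projection elimination of the slack variable $G_k$, Schur complements recovering the jump coupling $\sum_{(j,l)}\nu_{(i,k)(j,l)}P_{jl}$ via $\Pi_{ik}$ and $\mathcal{D}_{ik}$, and the congruence $P_{ik}=X_{ik}^{-1}$. Your added observation that the $(4,4)$ block forces $G_k+G_k^\top>0$ and hence nonsingularity of $G_k$ (rather than assuming it) is a correct refinement, and your identification of the partial-observability constraint ($K_{\av,k}$ independent of the hidden index $i$) as the reason the slack variable is indispensable matches the structural motivation of the paper's construction.
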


\begin{proof}
    See Appendix~\ref{app:proof1}.
\end{proof}

\begin{remark}
The scalar $\gamma$ characterizes the attenuation level of upstream disturbances. A smaller value of $\gamma$ corresponds to a tighter performance requirement, meaning the assistive controller is expected with stronger suppression effect of the leader vehicle's velocity fluctuation. In practice, tuning $\gamma$ balances performance against the feasibility of the LMI solution.
\end{remark}

While the controller in Theorem~\ref{thm:thm1} renders the system string stable, it may intervene excessively and compromise driver authority. To address this limitation, we treat the controller synthesized here as a nominal baseline and in the next section develop a minimal intervention controller that explicitly balances disturbance attenuation with limited automated effort. 

\section{Minimal Intervention Controller Design}\label{sec:MIC}
In Section~\ref{sec:stab}, we establish a nominal assistive controller that guarantees stochastic $\mathcal{L}_2$ string stability of the lane-changing system. In this section, we extend the baseline by developing a minimal intervention controller (MIC), which builds on the nominal controller but introduces an explicit penalty on automated effort to achieve stability with minimal automated intervention.

\subsection{Balance between stability performance and automated effort}
While the nominal controller in Theorem~\ref{thm:thm1} provides the necessary stability guarantees, it does not regulate the magnitude of automated effort, which is a critical aspect of shared control. Excessive or persistent intervention may compromise driver authority, whereas insufficient intervention may fail to stabilize the system. To address this limitation, we augment the performance output with a penalty on the assistive control input, so that the MIC explicitly balances disturbance attenuation with minimal automated intervention. The augmented performance output is defined as
\begin{align}
    \tilde{z}_{\text{aug}} = \begin{bmatrix}
        \tilde{z} \\
        \tilde{z}_{\text{eff}}
    \end{bmatrix} =\begin{bmatrix}
        C\tilde{x} \\
        \beta u_\av
    \end{bmatrix}
    = \begin{bmatrix}
        C\tilde{x} \\
        \beta (K_{\av,k}\tilde{x}+D_{\av,k}\tilde{v}_\leadv)
    \end{bmatrix} = \begin{bmatrix}
        C  \\
        \beta K_{\av,k} 
    \end{bmatrix}\tilde{x}  + \begin{bmatrix}
        0\\
        \beta D_{\av,k}
    \end{bmatrix}\tilde{v}_\leadv. \label{eq:aug_output}
\end{align}
Here, $\tilde{z}=C\tilde{x}$ retains the original output definition in Section~\ref{sec:stab}, which selects the follower vehicle's velocity perturbation to characterize disturbance propagation. The newly introduced term $\tilde{z}_{\text{eff}}=\beta u_\av$ penalizes automated effort, thereby discouraging unnecessary throttle or braking actions. The weight $\beta>0$ is specified offline as a design parameter. In practice, $\beta$ can be chosen based on traffic conditions or driver characteristics that can potentially be informed from historical driving data. For instance, higher $\beta$ values are appropriate for cautious drivers or dense traffic, where automation should intervene conservatively to preserve comfort and driver authority; whereas smaller $\beta$ values suit aggressive drivers or free-flow conditions that permit stronger automated intervention.

\begin{lemma}[Stochastic $\mathcal{L}_2$ performance under augmented output]\label{prop:mic_perf}
Consider the traffic system with the augmented output \eqref{eq:aug_output}, the system is stochastically $\mathcal{L}_2$ string stable if there exists $\gamma>0$ such that, for any $\tilde{v}_\leadv \in \mathcal{L}_2$ and initial condition $\tilde{x}(0)$,
\begin{align}
 \mathbb{E}\left\{ \int_0^\infty \tilde{v}_{\followv}^2(s)\,ds
+ \beta^2 \int_0^\infty u_\av^2(s)\,ds \right\}
\;\leq\;\gamma^2\,
\mathbb{E}\left\{\int_{0}^{\infty}\tilde{v}_\leadv^\top(s)\tilde{v}_\leadv(s)\,ds\right\}.
\label{eq:H_infity_energy_mic}
\end{align}
\end{lemma}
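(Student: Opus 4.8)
The plan is to recognize the hypothesis as nothing more than the stochastic $\mathcal{L}_2$-gain bound of Lemma~\ref{lem:L2} applied to the \emph{augmented} output $\tilde{z}_{\text{aug}}$, and then to recover string stability of the physical output $\tilde{z}$ by a simple dominance argument. First I would evaluate the integrand on the left-hand side of \eqref{eq:H_infity_energy_mic}. Since $\tilde{z}=C\tilde{x}=\tilde{v}_\followv$ and $\tilde{z}_{\text{eff}}=\beta u_\av$ are the two stacked blocks of $\tilde{z}_{\text{aug}}$ in \eqref{eq:aug_output}, the squared norm splits as
\begin{align}
\tilde{z}_{\text{aug}}^\top\tilde{z}_{\text{aug}} = \tilde{z}^\top\tilde{z}+\tilde{z}_{\text{eff}}^\top\tilde{z}_{\text{eff}} = \tilde{v}_\followv^2+\beta^2 u_\av^2 .
\end{align}
Hence the assumed inequality is exactly the stochastic $\mathcal{L}_2$ performance bound $\mathbb{E}\{\int_0^\infty \tilde{z}_{\text{aug}}^\top\tilde{z}_{\text{aug}}\,ds\}\le\gamma^2\,\mathbb{E}\{\int_0^\infty\tilde{v}_\leadv^\top\tilde{v}_\leadv\,ds\}$, now read with $\tilde{v}_\leadv$ as the disturbance input and $\tilde{z}_{\text{aug}}$ as the output.

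Next I would exploit that the physical performance variable $\tilde{z}$ is a sub-block of $\tilde{z}_{\text{aug}}$ and that the effort penalty $\beta^2 u_\av^2$ is pointwise nonnegative, so that $\tilde{z}^\top\tilde{z}\le\tilde{z}_{\text{aug}}^\top\tilde{z}_{\text{aug}}$ along every trajectory. Integrating and taking expectations yields
\begin{align}
\mathbb{E}\Big\{\int_0^\infty \tilde{z}^\top\tilde{z}\,ds\Big\} \le \mathbb{E}\Big\{\int_0^\infty \tilde{z}_{\text{aug}}^\top\tilde{z}_{\text{aug}}\,ds\Big\} \le \gamma^2\,\mathbb{E}\Big\{\int_0^\infty\tilde{v}_\leadv^\top\tilde{v}_\leadv\,ds\Big\},
\end{align}
which is precisely the criterion of Lemma~\ref{lem:L2} for the original output $\tilde{z}$. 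Invoking that lemma then certifies that the closed loop is stochastically $\mathcal{L}_2$ string stable in the sense of Definition~\ref{df:df1}. The same chain shows that one $\gamma$ simultaneously bounds the disturbance-to-follower gain and the weighted automated effort $\beta\|u_\av\|_{\mathcal{L}_2}$, which is the property the MIC synthesis will later exploit.

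The substantive content is therefore the one-line dominance $\tilde{z}^\top\tilde{z}\le\tilde{z}_{\text{aug}}^\top\tilde{z}_{\text{aug}}$; the remaining points are bookkeeping rather than genuine obstacles. First, Lemma~\ref{lem:L2} states attenuation at the normalized threshold $\gamma\le 1$, so to interpret the conclusion as true non-amplification I would either restrict to $\gamma\le 1$ or, following Definition~\ref{df:df1}, exhibit the class~$\mathcal{K}$ function $\varphi(r)=\gamma r$ that certifies finite-gain stability for any $\gamma>0$. Second, Lemma~\ref{lem:L2} is phrased for zero initial condition, whereas the claim is stated for general $\tilde{x}(0)$; I would absorb the transient contribution into the class~$\mathcal{K}$ function $\psi(\|\tilde{x}(0)\|)$ of Definition~\ref{df:df1}, relying on the exponential decay furnished by the mode-dependent Lyapunov certificate that underlies the $\mathcal{L}_2$ estimate. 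Neither step requires new computation, so the main effort is simply identifying the augmented output's energy with the quadratic form appearing on the left of \eqref{eq:H_infity_energy_mic}.
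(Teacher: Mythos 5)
Your proof is correct and takes essentially the same route as the paper, which states Lemma~\ref{prop:mic_perf} without an explicit proof, treating it as a direct generalization of Lemma~\ref{lem:L2} with the output channel replaced by the augmented output: your identification $\tilde{z}_{\text{aug}}^\top\tilde{z}_{\text{aug}}=\tilde{v}_\followv^2+\beta^2 u_\av^2$ together with the pointwise dominance $\tilde{z}^\top\tilde{z}\le\tilde{z}_{\text{aug}}^\top\tilde{z}_{\text{aug}}$ supplies exactly the missing link, and this is precisely how the bound \eqref{eq:H_infity_energy_mic} is consumed in the paper's proof of Theorem~\ref{thm:mic_lmi}. Your two bookkeeping remarks (reconciling the $\gamma\le 1$ threshold of Lemma~\ref{lem:L2} with the $\gamma>0$ in the statement, and absorbing nonzero $\tilde{x}(0)$ into the class~$\mathcal{K}$ function $\psi$ of Definition~\ref{df:df1}) address ambiguities the paper itself leaves implicit.
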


Compared with Lemma~\ref{lem:L2}, Lemma~\ref{prop:mic_perf} generalizes the performance criterion by replacing the output channel with the augmented form in \eqref{eq:aug_output}. The parameter $\gamma$ thus quantifies a joint bound that reflects both disturbance propagation in the follower vehicle and the penalized assistive control input. The design weight $\beta$ regulates this balance: A larger $\beta$ imposes a stronger penalty on automated effort, effectively restricting automation authority and making the condition more conservative with weaker stability performance, whereas a smaller $\beta$ emphasizes disturbance attenuation, enabling stronger controller corrections and yielding tighter stability bounds.

\subsection{Controller synthesis with minimal intervention}
Based on the augmented output and the performance condition, we next derive sufficient LMI conditions that guarantee stochastic $\mathcal{L}_2$ string stability while minimizing automated intervention.
\begin{theorem}[Stochastic $\mathcal{L}_2$ string stability with minimal intervention]\label{thm:mic_lmi}
    Consider the closed-loop system~\eqref{eq:system2} under the joint process $Z(t) = (\eta(t), \hat{\eta}(t)) \in \mathcal{N} \times \mathcal{M}$ with transition rates $\nu_{(i,k)(j,l)}$, and the augmented output definition in \eqref{eq:aug_output}. For each observed mode $k \in \mathcal{M}$, the control gains:
    \begin{align}
    K_{\av,k} = V_k G_k^{-1},\quad D_{\av,k} = L_k
\end{align}
render the system stochastically $\mathcal{L}_2$ string stable under the augmented output, provided that there exist matrices $X_{ik}>0$, $V_k$, $L_k$, and non-singular $G_k$ for $i\in \mathcal{N}$ and $k\in \mathcal{M}$, and a scalar $\varepsilon>0$, satisfying the following LMIs:
\begin{align}
        \begin{bmatrix}
        \nu_{(i,k)(i,k)} X_{ik} & \Phi_{ik} & 0  & 0 &X_{ik} & X_{ik}\Pi_{ik} \\
        \Phi_{ik}^\top & -\gamma^2 I & 0 & \beta L_k^\top &0 & 0\\
        0 & 0 & -I & 0 & 0 & 0\\
        0 & \beta L_k & 0 & -I & 0 & 0\\
        X_{ik}^\top & 0 &0 & 0 & 0 & 0\\
        \Pi_{ik}^\top X_{ik}^\top &0 & 0 & 0 & 0 & -\mathcal{D}_{ik}
    \end{bmatrix} + \mathrm{Her}\left(\begin{bmatrix} \Omega_{ik}\\
        0 \\
        C G_k \\
       \beta V_k \\
        -G_k \\
        0
    \end{bmatrix} \begin{bmatrix}
        \varepsilon I\\
        0\\
        0\\
        0\\
        I\\
        0
    \end{bmatrix}^\top \right)<0,
    \label{eq:LMI1}
\end{align}
where $\Omega_{ik}$, $\Phi_{ik}$, $\Pi_{ik}$, and $\mathcal{D}_{ik}$ are defined consistently with \eqref{eq:LMI_con1} to \eqref{eq:LMI_con5}.
\end{theorem}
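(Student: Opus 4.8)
The plan is to mirror the stochastic dissipativity argument already used for Theorem~\ref{thm:thm1}, since the closed-loop dynamics \eqref{eq:system2} and the joint jump process $Z(t)=(\eta,\hat\eta)$ are unchanged; the only modification is that the scalar output $\tilde z = C\tilde x$ is replaced by the augmented output $\tilde z_{\text{aug}}$ of \eqref{eq:aug_output}. First I would fix the mode-dependent stochastic Lyapunov function $V(\tilde x,(i,k)) = \tilde x^\top P_{(i,k)}\tilde x$ with $P_{(i,k)}>0$, and compute its weak infinitesimal generator $\mathcal A V$ along \eqref{eq:system2}. For a continuous-time Markov jump system this produces the drift term $\mathrm{Her}(P_{(i,k)}[A+BK_{\human,i}+BK_{\av,k}])$ together with the jump-coupling term $\sum_{(j,l)\in\mathcal V}\nu_{(i,k)(j,l)}P_{(j,l)}$, plus the cross terms in $\tilde v_\leadv$ coming from $D+BD_{\human,i}+BD_{\av,k}$.

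Next I would invoke the performance condition of Lemma~\ref{prop:mic_perf}: it suffices to establish the dissipation inequality $\mathcal A V + \tilde z_{\text{aug}}^\top \tilde z_{\text{aug}} - \gamma^2\,\tilde v_\leadv^\top\tilde v_\leadv < 0$, because integrating it via Dynkin's formula with $\tilde x(0)=0$ and $V\ge 0$ returns exactly the energy bound \eqref{eq:H_infity_energy_mic}, hence stochastic $\mathcal L_2$ string stability in the sense of Definition~\ref{df:df1}. Expanding the augmented output gives $\tilde z_{\text{aug}}^\top\tilde z_{\text{aug}} = \tilde x^\top C^\top C\tilde x + \beta^2(K_{\av,k}\tilde x + D_{\av,k}\tilde v_\leadv)^\top(K_{\av,k}\tilde x + D_{\av,k}\tilde v_\leadv)$, so relative to Theorem~\ref{thm:thm1} the inequality acquires exactly one additional rank-structured quadratic channel in the stacked variable $[\tilde x^\top\ \tilde v_\leadv^\top]^\top$.

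The final step is to convert this matrix inequality, which is bilinear in $P_{(i,k)}$ and the gains, into the LMI \eqref{eq:LMI1}. I would first Schur-complement the jump-coupling sum through the blocks $X_{ik}\Pi_{ik}$ and $-\mathcal D_{ik}$ defined in \eqref{eq:LMI_con4}--\eqref{eq:LMI_con5}, identifying $X_{ik}=P_{(i,k)}^{-1}$ so that the congruence by $X_{ik}$ turns $\sum\nu P_{(j,l)}$ into the representable form. I would then apply the same dilation/slack-variable construction as in \eqref{eq:LMI0}, introducing the nonsingular $G_k$ (depending only on the observed mode $k$), setting $V_k=K_{\av,k}G_k$ and $L_k=D_{\av,k}$, so that $\Omega_{ik}=(A+BK_{\human,i})G_k+BV_k$ and $\Phi_{ik}=D+BD_{\human,i}+BL_k$ reproduce the closed-loop drift and disturbance maps of \eqref{eq:LMI_con1}--\eqref{eq:LMI_con2}. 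Each output channel contributes a separate Schur block with a $-I$ diagonal: the $C$-channel yields $CG_k$ exactly as in \eqref{eq:LMI0}, while the new effort channel $\beta u_\av$ yields the additional fourth row/column carrying $\beta L_k$ in the disturbance position and, crucially, $\beta V_k = \beta K_{\av,k}G_k$ inside the $\mathrm{Her}(\cdot)$ slack column, so that the quadratic term $\beta^2 K_{\av,k}^\top K_{\av,k}$ is absorbed without introducing a separate constraint.

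I expect the main obstacle to be exactly this last point: the controller gains may depend only on the observed mode $k$, whereas the Lyapunov certificate $P_{(i,k)}$ must depend on the joint mode $(i,k)$. Linearizing the products $P_{(i,k)}BK_{\av,k}$ and the new term $\beta^2 K_{\av,k}^\top K_{\av,k}$ simultaneously requires a single $k$-indexed slack $G_k$ shared across all hidden modes $i$, which is precisely what the dilated $\mathrm{Her}(\cdot)$ form with the auxiliary scalar $\varepsilon$ achieves. Verifying that $G_k$ is forced nonsingular (so that $K_{\av,k}=V_kG_k^{-1}$ is well-defined) and that the effort channel folds into the \emph{same} slack column as the existing $C$-channel term, rather than enlarging the slack block, is the delicate bookkeeping that distinguishes this synthesis from the standard bounded-real lemma. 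The remaining manipulations are the routine congruence and Schur steps already validated in the proof of Theorem~\ref{thm:thm1}.
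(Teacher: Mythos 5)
Your proposal is correct and follows essentially the same route as the paper's proof: the same mode-dependent quadratic Lyapunov function and weak infinitesimal generator, the same dissipation inequality target $\mathcal{L}V + \tilde z_{\text{aug}}^\top \tilde z_{\text{aug}} - \gamma^2 \tilde v_\leadv^\top \tilde v_\leadv < 0$ justified by Dynkin's formula with $\tilde x(0)=0$, and the same Finsler/slack-variable dilation with the shared $k$-indexed $G_k$, change of variables $V_k = K_{\av,k}G_k$, $L_k = D_{\av,k}$, $X_{ik}=P_{ik}^{-1}$, and Schur complements absorbing both the jump-coupling sum and the $\beta$-weighted effort channel. The only cosmetic difference is direction of presentation (you synthesize the LMI from the dissipation inequality, the paper unwinds the LMI into it), but since every step is an equivalence this is the same argument.
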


\begin{proof}
    See Appendix~\ref{app:proof2}.
\end{proof}

Theorem~\ref{thm:mic_lmi} therefore provides a tractable synthesis condition for the minimal intervention controller. Beyond ensuring stochastic $\mathcal{L}_2$ string stability, it embeds explicit constraints on automated effort, aligning the mathematical design with the broader principle of preserving driver authority in shared control. In particular, the tunability of $\beta$ enables the controller to span a continuum between minimal automated involvement and more assertive stabilization. This flexibility is essential in traffic scenarios such as lane changes, where both stability and driver acceptance must be maintained. In the synthesis, $\beta$ appears only in the automated-effort channel and enters quadratically after the Schur complement, so larger $\beta$ tightens the feasibility region and increases the theoretical attained bound of $\gamma$. The next section will present simulation results that validate the proposed controller and illustrate the balance between stability and automated effort.

\section{Numerical results}\label{sec:sim}
In this section, we conduct simulations to validate the proposed assistive controller in lane-changing scenarios. We compare simulation results by human-only control, nominal shared control, and MIC to highlight disturbance suppression and stability maintenance. These experiments demonstrate both the baseline performance and the impact of the minimal intervention design.

\subsection{Simulation setup}\label{sec:setup}
For the human behavior mode $\eta(t)$ in Section~\ref{sec:formulation}, we adopt a task difficulty (TD) indicator to quantify the driver's instantaneous cognitive demand and determine the mode. Following \cite{saifuzzaman2015revisiting}, the TD is formulated from measurable quantities such as vehicle speed and spacing to the leader vehicle:
\begin{align}
\text{TD}(t) = \left(\frac{v_\egov(t) \cdot T_{\text{des}}}{(1 - \delta) \cdot s_{\egov\leadv}(t)}\right)^\zeta,
\end{align}
where $T_{\text{des}}$ is the desired time headway, $\delta$ is a risk parameter reflecting tolerance for time pressure and small gaps, and $\zeta$ reflects sensitivity. The TD metric thus captures perceived urgency during lane changes. Typically, task difficulty increases when the available lane-changing gap becomes small or when the ego-vehicle travels at a higher speed, both of which impose greater cognitive demand on the human driver. Conversely, larger gaps and lower speeds lead to a less demanding situation and thus a lower task difficulty value. This relationship allows us to associate observable traffic conditions with variations in human lane-changing behavior. For instance, a high TD value often corresponds to the active execution phase of the lane change, where the driver accelerates and reduces spacing to complete the maneuver within a limited gap. In contrast, when TD is low, the driver is more likely to be cruising steadily or making minor adjustments to restore spacing after completing the lane change. The human behavior mode $\eta(t)$ is determined by thresholding the TD value:
\begin{align}
\eta(t) =
\begin{cases}
1, & \text{if } \text{TD}(t) < \text{TD}_\text{threshold}, \\
2, & \text{otherwise},
\end{cases}
\end{align}
with $\text{TD}_\text{threshold} = 1$. Mode $\eta(t) = 1$ corresponds to low task difficulty, and $\eta(t) = 2$ corresponds to high task difficulty. 

To model human behavior transitions, we use real-world lane-changing trajectories from the Next Generation Simulation (NGSIM) dataset to calibrate the transition rates. Specifically, we use data collected between 5:00 p.m. and 5:15 p.m. on a 500-meter segment of Interstate 80 in Emeryville, California. We calibrate the desired time headway $T_{\text{des}}$ and risk sensitivity parameter $\delta$ to reflect realistic responses under varying traffic conditions. Then, we compute task difficulty profiles and assign behavior modes $\eta(t)$ from the collected trajectories, and the inferred mode sequences are used to estimate the transition rate matrix $\Lambda = [\lambda_{ij}]$ via Maximum Likelihood Estimation.

For the observation mode $\hat{\eta}(t)$, we adopt a symmetric specification for tractability and consistency with our implementation. All $\alpha_{jl}^k$ share the same $2\times2$ structure for each $k\in\{1,2\}$, and all $q_{kl}^i$ share the same generator structure for each $i\in\{1,2\}$. Specifically, for each $k\in\{1,2\}$:
\begin{align}
    \big[\alpha_{jl}^k\big]= \begin{bmatrix}
        1-\alpha & \alpha \\
        \alpha & 1-\alpha
    \end{bmatrix}, \quad 0 \le \alpha \le 1.
\end{align}
which ensures that each row sums to one, thereby defining a valid conditional probability distribution over $\hat{\eta}$ given $\eta$. Likewise, for each $i\in \{1,2\}$,
\begin{align}
    \big[q_{kl}^i\big]=\begin{bmatrix}
-q & q\\
q & -q
\end{bmatrix},
\quad q>0,
\end{align}
which guarantees that each row sums to zero with nonnegative off-diagonal entries, and thus constitutes a valid continuous-time Markov generator over $\hat{\eta}$. We set $\alpha=0.05$ and $q=0.02~\text{s}^{-1}$. In Appendix~\ref{app:sensitivity}, we further conduct sensitivity analysis, which shows that the stochastic $\mathcal{L}_2$ string stability is also guaranteed with other $\alpha$ and $q$ values.

For the human control input, we set it as a mode-dependent Optimal Velocity Model (OVM):
\begin{align}
    u_{\human,i} = a_i(V(s_{\egov\leadv})-v_\egov) + b_i(v_\leadv-v_\egov),
\end{align}
where $a_i$ and $b_i$ denote the sensitivities to the desired velocity difference and relative velocity, respectively. The desired speed function $V(s)$ is formulated as a piecewise-continuous function:
\begin{align}
    V(s) = \begin{cases}
        0,& s\leq s_{\mathrm{st}}\\
        \frac{v_{\max}}{2}\big(1-\cos\big(\pi \frac{s-s_{\mathrm{st}}}{s_{\mathrm{go}} - s_{\mathrm{st}}}\big)\big),& s_{\mathrm{st}}<s<s_{\mathrm{go}}\\
        v_{\max}, &s\ge s_{\mathrm{go}},
    \end{cases}
\end{align}
where $v_{\max}$ is the maximum desired speed, $s_{\mathrm{st}}$ the stopping distance, and $s_{\mathrm{go}}$ the free-flow distance. We retain identical desired velocity profiles across both modes by fixing $v_{\max}$, $s_{\mathrm{st}}$, and $s_{\mathrm{go}}$. Mode-dependent differences are introduced solely through the sensitivity parameters $a_i$ and $b_i$, which determine the driver's responsiveness to desired velocity and relative speed, respectively. In the low task difficulty mode ($\eta = 1$), the driver exhibits increased responsiveness to the desired velocity, as indicated by a higher $a_1$ value. Conversely, in the high task difficulty mode ($\eta = 2$), the driver becomes more sensitive to the relative speed to the leader, as reflected by the larger $b_2$ value, which captures the heightened reactivity under elevated cognitive demand such as during active lane-changing. To accurately reflect human control responses under different task difficulty levels, we calibrate the OVM parameters $(a_1, b_1, a_2, b_2, v_{\max}, s_{\mathrm{st}}, s_{\mathrm{go}})$ using lane-changing trajectories from the same NGSIM dataset. We use the Global Least-Squared Errors Calibration approach as in~\cite{treiber2013microscopic}, where simulated trajectories are forward-integrated and fitted to minimize deviation from empirical observations. The calibrated parameters are summarized in Table~\ref{tab:ngsim_all}.

\begin{table}[t]
\centering
\begin{threeparttable}
\caption{Calibrated driver parameters from the NGSIM dataset}
\label{tab:ngsim_all}
\begin{tabularx}{0.95\linewidth}{@{\hspace{6pt}} l @{\extracolsep{\fill}} c c c l @{\hspace{6pt}}}
\toprule
Category & Symbol & Value & Unit & Description \\
\midrule
TD parameters & $T_{\text{des}}$ & 1.09 & s   & Desired headway \\
    & $\delta$         & 0.30 & –   & Risk sensitivity \\
    & $\zeta$         & 1.00 & –   & Exponent for TD scaling \\
\midrule
Markov transition rates & $\lambda_{11}$ & -0.0454 & s$^{-1}$ & Leaving rate from low TD \\
                        & $\lambda_{12}$ & 0.0454  & s$^{-1}$ & Switching $1 \to 2$ \\
                        & $\lambda_{21}$ & 0.1117  & s$^{-1}$ & Switching $2 \to 1$ \\
                        & $\lambda_{22}$ & -0.1117 & s$^{-1}$ & Leaving rate from high TD \\
\midrule
Mode-dependent OVM & $a_1$ & 0.25  & s$^{-1}$ & Sensitivity to desired speed ($\eta=1$) \\
    & $b_1$ & 0.10  & s$^{-1}$ & Sensitivity to relative speed ($\eta=1$) \\
    & $a_2$ & 0.18  & s$^{-1}$ & Sensitivity to desired speed ($\eta=2$) \\
    & $b_2$ & 0.17  & s$^{-1}$ & Sensitivity to relative speed ($\eta=2$) \\
    & $s_{\mathrm{st}}$ & 3.50 & m   & Stopping gap ($\eta=1,2$) \\
                   & $s_{\mathrm{go}}$ & 20.50 & m   & Free-flow gap ($\eta=1,2$) \\
                   & $v_{\max}$ & 20.00 & m/s & Maximum desired speed ($\eta=1,2$) \\
\bottomrule
\end{tabularx}
\begin{tablenotes}
\footnotesize
\item TD and Markov parameters are calibrated using representative lane-changing trajectories from NGSIM.  
\item OVM parameters are estimated via forward-simulation fitting, showing that when $\eta=1$, drivers are more sensitive to desired speed, while under $\eta=2$, drivers are more sensitive to relative speed.
\end{tablenotes}
\end{threeparttable}
\end{table}

The follower vehicle dynamics $f_{\followv}$ are modeled as a car-following model that dynamically selects its leader based on the relative positions of the ego-vehicle and the leader vehicle. At each time instant, the follower vehicle follows whichever vehicle is in front; if the ego-vehicle is behind, the follower vehicle follows the leader vehicle. The acceleration of the follower vehicle is:
\begin{align}
\dot{v}_{\followv} = \begin{cases}
    a(V(s_{\followv\leadv}) - v_{\followv}) + b(v_{\leadv}-v_{\followv}), \text{ if } 
    s_{\followv\egov}\le0\\
    a(V(s_{\followv\egov}) - v_{\followv}) + b(v_{\egov}-v_{\followv}), \text{ if } s_{\followv\egov}>0
\end{cases}
\label{eq:fv_model}
\end{align}
The equilibrium of the lane-changing system is defined with respect to the case where the follower vehicle follows the ego-vehicle, ensuring a consistent model definition once the ego-vehicle has changed the lane. We calibrate follower vehicle model parameters from NGSIM trajectories as: $a=0.26$ s$^{-1}$, $b=0.09$ s$^{-1}$, $s_{\mathrm{st}}=3.0$ m, $s_{\mathrm{go}}=22.0$ m, $v_{\max}=28.0$ m/s. 

We define the completion criterion for lane-changing based on both the rear gap $s_{\followv\egov}$ and the front gap $s_{\egov\leadv}$. 
Specifically, the ego-vehicle is considered to have fully changed into the target lane at the earliest time $t$ such that
\begin{align}
    s_{\followv\egov}(t) > s_{\text{rear,thr}}, \quad 
    s_{\egov\leadv}(t) > s_{\text{front,thr}}, \quad 
    s_{\egov\leadv} - \tau \big(v_\egov - v_\leadv \big) > 0, \quad s_{\followv\egov} - \tau \big(v_\followv - v_\egov \big)>0,
\end{align}
where $\tau$~s is a safe time-to-collision, $s_{\text{rear,thr}}$ and $s_{\text{front,thr}}$ are the minimum gap for the follower–ego and ego-leader respectively. We adopt thresholds extracted from NGSIM lane-changing trajectories, with $\tau = 1$ s, $s_{\text{rear,thr}}=8.8$ m, and $s_{\text{front,thr}}=7.3$ m.

\subsection{Stability improvement by nominal controller}
We consider a scenario in which all vehicles start from the equilibrium velocity $v^*$ and corresponding gaps $s_{\followv\egov}^*$ and $s_{\egov\leadv}^*$. We set the leader vehicle to have a braking-acceleration maneuver:
\begin{align}
    \dot{v}_\leadv =  
    \begin{cases}
        -a_\leadv & \quad \text{if}\quad t\in [0,t_\leadv],\\
        a_\leadv  & \quad \text{if}\quad t\in (t_\leadv, 2t_\leadv],\\
       0 & \quad \text{otherwise},
    \end{cases}
\end{align}
where $a_\leadv = 2~\text{m/s}^2$, $t_\leadv = 2$ s. This speed profile provides a disturbance for evaluating the flow propagation of the traffic system.

We compare the following two control schemes:
\begin{enumerate}
    \item \textbf{Human-only control}, in which the ego-vehicle is governed solely by the mode-dependent human input $u_{\human}(t)$ under $\eta(t)$.
    \item \textbf{Nominal shared control}, in which the ego-vehicle is co-driven by $u(t)=u_{\human}(t)+u_{\av}(t)$, where the automated action $u_{\av}(t)$ is generated according to the observed mode $\hat{\eta}(t)$ using feedback gains $(K_{\av,k},D_{\av,k})$ obtained from solving a set of LMIs:
    \begin{align}
\label{eq:gamma0_nominal}
&\gamma_{0}= \min_{P_{ik}> 0,\,K_{\av,k},\,D_{\av,k}} \ \gamma \\
     \text{s.t.} \quad &   
\mathcal{LMI}_0\!\big(P_{ik},K_{\av,k},D_{\av,k} \big)\ < 0, \notag
\end{align}
where $\mathcal{LMI}_0(\cdot)< 0$ collects the feasibility conditions stated in Theorem~\ref{thm:thm1}.
\end{enumerate}

The performance on string stability is assessed by the empirical gain as the ratio between the $\mathcal{L}_2$ energy of the follower vehicle's speed perturbation $\tilde v_{\followv}$ and that of the disturbance $\tilde{v}_\leadv$:
\begin{align}
\gamma_{\mathrm{est}}= \frac{\|\,\tilde v_{\followv}\,\|_{\mathcal{L}_2(0,T)}}{\|\,\tilde v_{\leadv}\,\|_{\mathcal{L}_2(0,T)}}.
\end{align}
where $T$ is the simulation horizon. The stability threshold is $\gamma_{\mathrm{est}}=1.0$: values below one imply stochastic $\mathcal{L}_2$ string stability, while values above one imply instability. 

We conduct 100 Monte Carlo simulations with randomized $\eta(t)$ switching paths and set the simulation duration as $T = 20$ s. Table~\ref{tab:L2norm} gives the minimum, maximum, and average value of $\gamma_{\mathrm{est}}$ for human-only control and shared control during all Monte Carlo simulations. The human-only system amplifies the disturbance with a ratio significantly greater than one and is therefore unstable. In contrast, the nominal shared control reduces the amplification with $\gamma_{\mathrm{est}}\le1$, thereby ensuring stochastic $\mathcal{L}_2$ string stability.

To illustrate the controller's effect on mode tracking, we report a representative run using the same TD-based true mode $\eta(t)$ for both schemes. Fig.~\ref{fig:mode_evolution1} shows $\eta(t)$ under human-only control. Fig.~\ref{fig:mode_evolution2} gives the true mode $\eta(t)$ and the estimated mode $\hat{\eta}(t)$ under nominal shared control. The high task difficulty phase is prolonged when no automated assistance is applied, while it becomes shorter under nominal shared control. This indicates that the nominal controller stabilizes disturbances and shortens high task difficulty phases, thereby reducing driver workload.

\begin{table}[t]
\centering
\begin{threeparttable}
\caption{Stochastic $\mathcal{L}_2$ string stability under 100 Monte Carlo runs.}
\label{tab:L2norm}
\begin{tabularx}{0.85\linewidth}{@{\hspace{6pt}} l @{\extracolsep{\fill}} c c c c l @{\hspace{6pt}}}
\toprule
\multirow{2}{*}{Control scheme} & \multicolumn{4}{c}{$\gamma_{\mathrm{est}}$ statistics} & \multirow{2}{*}{Stability evaluation} \\
\cmidrule(l{-5pt}r{0pt}){2-5}
& Mean & Max & Min & Variance &  \\
\midrule
Human-only             & 1.5853 & 1.6600 & 1.4116 & 2.6$\times 10^{-3}$ & Unstable ($> 1$) \\
Nominal shared control & 0.8572 & 0.8613 & 0.8554 & 2.4$\times 10^{-6}$ & Stable ($\le 1$) \\
\bottomrule
\end{tabularx}
\end{threeparttable}
\end{table}

\begin{figure}[t]
    \centering
    \subfigure[Human-only: true mode $\eta(t)$]{
        \includegraphics[width=0.45\linewidth]{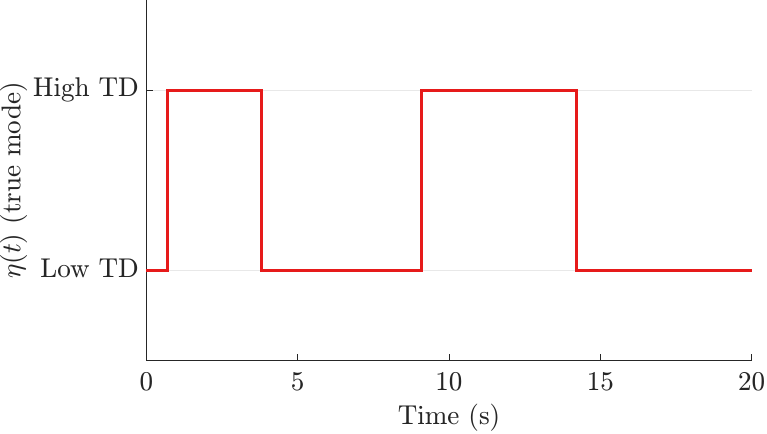}
        \label{fig:mode_evolution1}
    }\quad
    \subfigure[Nominal shared: $\eta(t)$ and $\hat{\eta}(t)$]{
        \includegraphics[width=0.45\linewidth]{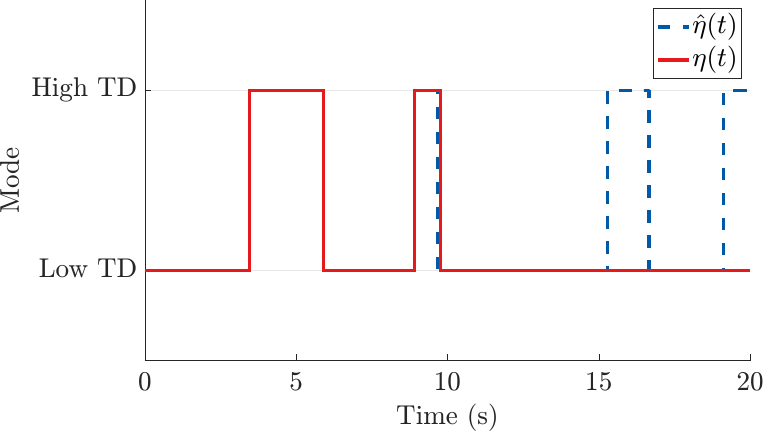}
        \label{fig:mode_evolution2}
    }
    \caption{Mode evolution under the two schemes. The left panel shows the TD-based true mode $\eta(t)$ when no automated assistance is applied. The right panel plots the true mode and the observed mode produced by the estimator under nominal shared control. With AV assistance, the high TD duration is shortened, which implies lower driving difficulties for human drivers.}
\end{figure}

Fig.~\ref{fig:P1_human_nominal} compares the ego-vehicle's acceleration, speed, and spacing profiles under the two control schemes, and the vertical dashed lines indicate the lane-changing time. Under human-only control, the driver completes the lane change at $t=7.51$ s. With shared control, the maneuver finishes earlier at $t=4.59$ s. As shown in Fig.~\ref{fig:P1_human_acc}, the human driver applies a relatively mild deceleration around $t=3$ s to $4$ s, which delays gap acceptance and induces oscillations in the follower vehicle's speed and spacing (Figs.~\ref{fig:P1_human_speed}, \ref{fig:P1_human_gaps}). By contrast, under nominal shared control in Figs.~\ref{fig:P1_nominal_acc}–\ref{fig:P1_nominal_gaps}, the ego-vehicle executes a sharper but well-regulated deceleration, reacting more sensitively to the leader vehicle's motion while simultaneously accounting for the follower vehicle's state. This coordinated response prevents unnecessary speed loss and avoids large oscillations in the follower vehicle, even though the observed mode $\hat{\eta}(t)$ does not perfectly coincide with the true mode $\eta(t)$. Overall, the nominal shared controller enables faster gap acceptance while maintaining stability.
\begin{figure}[t]
    \centering
    Human-only control\\
    \subfigure[Acceleration $a$]{
        \includegraphics[width=0.31\linewidth]{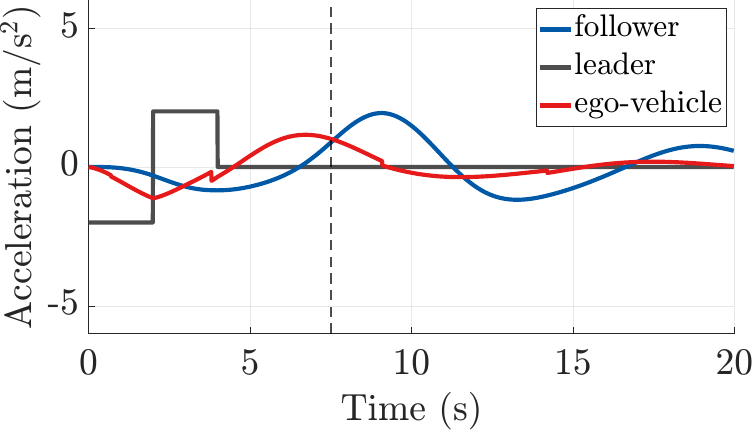}
        \label{fig:P1_human_acc}
    }
    \subfigure[Speed $v$]{
        \includegraphics[width=0.31\linewidth]{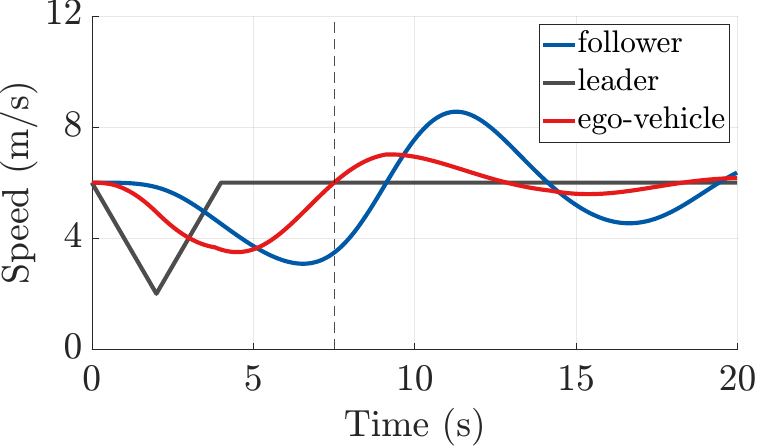}
        \label{fig:P1_human_speed}
    }
    \subfigure[Gap $s$]{
        \includegraphics[width=0.31\linewidth]{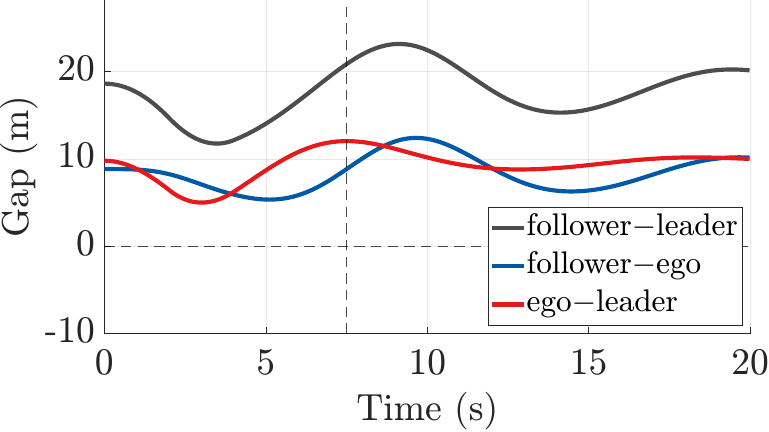}
        \label{fig:P1_human_gaps}
    }\\
    Nominal shared control \\
    \subfigure[Acceleration $a$]{
        \includegraphics[width=0.31\linewidth]{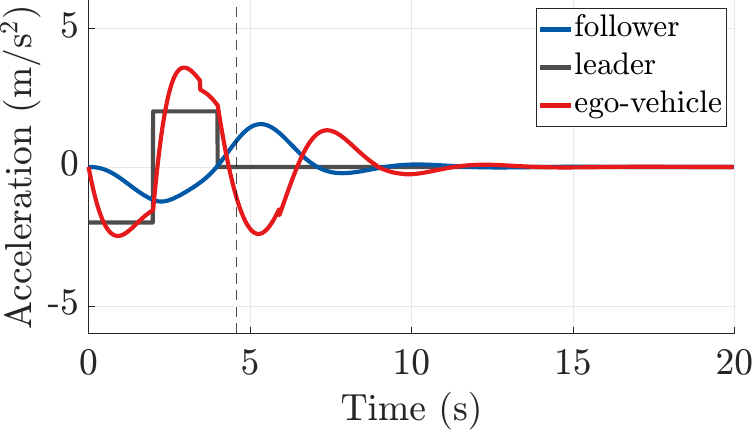}
        \label{fig:P1_nominal_acc}
    }
    \subfigure[Speed $v$]{
        \includegraphics[width=0.31\linewidth]{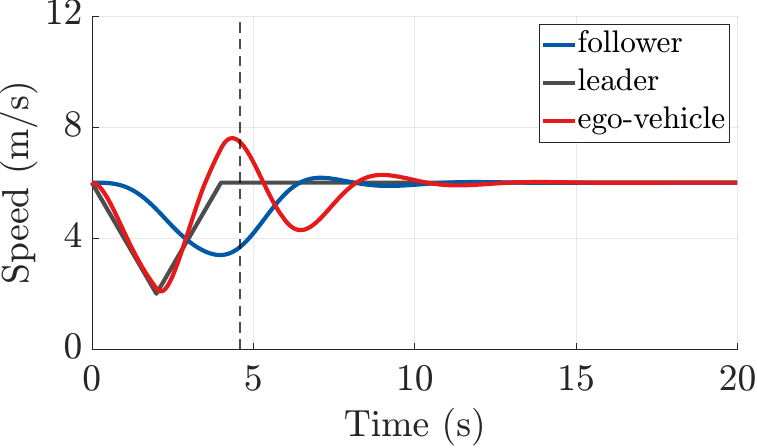}
        \label{fig:P1_nominal_speed}
    }
    \subfigure[Gap $s$]{
        \includegraphics[width=0.31\linewidth]{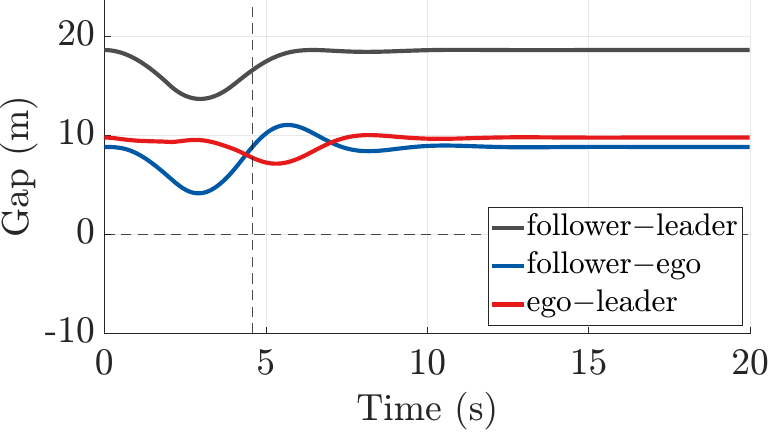}
        \label{fig:P1_nominal_gaps}
    }
    \caption{Trajectories of vehicles under human-only control and nominal shared control.}
    \label{fig:P1_human_nominal}
\end{figure}

\subsection{Compare MIC vs nominal controller}
To illustrate the effect of minimal intervention, we compare the proposed MIC with the nominal controller under the same lane-changing scenario. We set $\beta=2.0$ for the MIC as a representative case. The MIC gains are obtained by solving the LMIs in Theorem \ref{thm:mic_lmi}. This procedure yields the LMI-based bound $\gamma_0$, defined as the minimum feasible value of $\gamma$ under the associated matrix inequalities. Specifically,
\begin{align}
\label{eq:gamma0_def}
& \gamma_{0}= \min_{ P_{ik}> 0,\,K_{\av,k},\,D_{\av,k}}    \gamma \\
\text{s.t.} \quad&
\mathcal{LMI}\!\big(P_{ik},K_{\av,k},D_{\av,k};\beta\big)\ < 0, \notag
\end{align}
where $\mathcal{LMI}(\cdot)< 0$ collects the feasibility conditions stated in Theorem~\ref{thm:mic_lmi}.

Fig.~\ref{fig:P1_MIC_2} shows the simulated vehicle trajectories. Compared with the nominal controller in Figs.~\ref{fig:P1_nominal_acc}–\ref{fig:P1_nominal_gaps}, which completes the maneuver at about $t=4.59$ s, the MIC finishes slightly later at $t=5.98$ s, reflecting its more conservative reliance on automated assistance. The ego-vehicle's acceleration profile in Fig.~\ref{fig:P1_MIC_acc} is noticeably smoother, with the braking peak at $t=3$ s to $4$ s reduced to avoid abrupt deceleration. As a result, the ego-vehicle's speed in Fig.~\ref{fig:P1_MIC_speed} recovers more steadily, stabilizing near $6~\text{m/s}$ by $t=7$ s. For the follower vehicle, oscillations in both speed and gap remain limited (Figs.~\ref{fig:P1_MIC_speed}, \ref{fig:P1_MIC_gaps}), showing that MIC effectively balances responsiveness to the leader vehicle with disturbance mitigation for the follower vehicle and maintains comparable stability characteristics relative to the nominal controller.

\begin{figure}[t]
    \centering
    \subfigure[Acceleration $a$]{
        \includegraphics[width=0.31\linewidth]{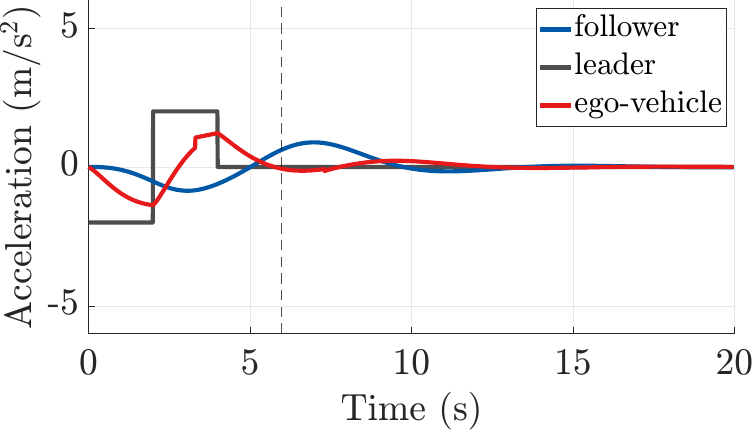}
        \label{fig:P1_MIC_acc}
    }
    \subfigure[Speed $v$]{
        \includegraphics[width=0.31\linewidth]{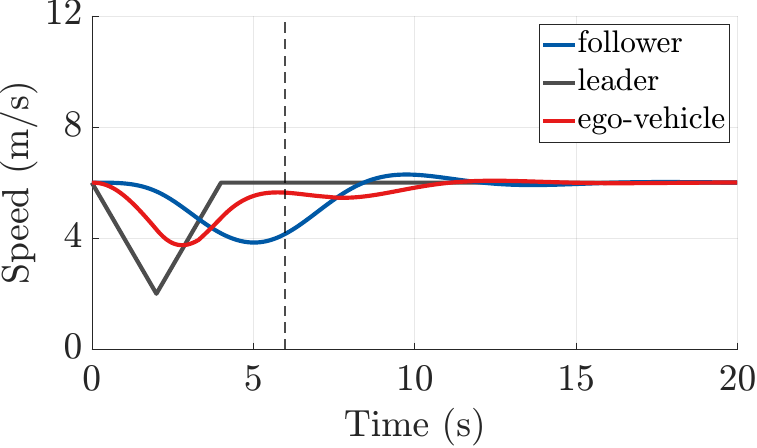}
        \label{fig:P1_MIC_speed}
    }
    \subfigure[Gap $s$]{
        \includegraphics[width=0.31\linewidth]{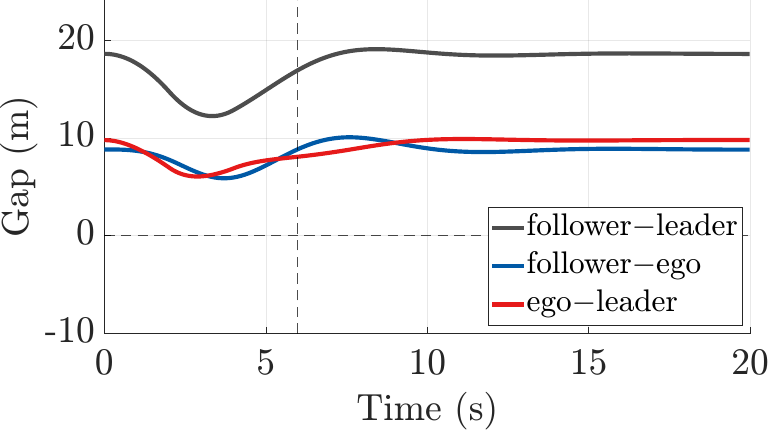}
        \label{fig:P1_MIC_gaps}
    }
    \caption{Trajectories of vehicles under shared control with MIC.}
    \label{fig:P1_MIC_2}
\end{figure}

Fig.~\ref{fig:delta_u_compare} compares the control effort of the nominal controller and MIC. Under the nominal controller, the difference signal $\Delta u(t)=u_{\human}(t)-u_{\av}(t)$ in Fig.~\ref{fig:nominal_delta_u} reaches relatively large values, with peaks approaching $\pm 4~\text{m/s}^2$ around $t=2$ s and $t=4$ s. This indicates that the automated system applies considerable effort to stabilize the system. 
As seen in Fig.~\ref{fig:nominal_inputs}, $u_{\av}$ exhibits strong peaks in the opposite direction of $u_{\human}$, and the combined input $u(t)=u_{\human}(t)+u_{\av}(t)$ correspondingly produces pronounced peaks at similar times.
By contrast, when MIC is applied, as shown in Figs.~\ref{fig:MIC_delta_u} and \ref{fig:MIC_inputs}, the difference $\Delta u(t)$ becomes much smaller. Around $t=4$ s, its minimum reaches roughly $-2~\text{m/s}^2$, while the total input $u(t)$ also exhibits reduced peaks with smoother temporal variations, as $u_{\av}$ remains moderate and more aligned with $u_{\human}$. 
These results demonstrate that MIC not only reduces automated intervention but also improves coordination between human and automation, thereby preserving driver authority while maintaining acceptable stability performance.

\begin{figure}[t]
    \centering
    Nominal controller\\
    \subfigure[$\Delta u = u_\human-u_\av$]{
        \includegraphics[width=0.31\linewidth]{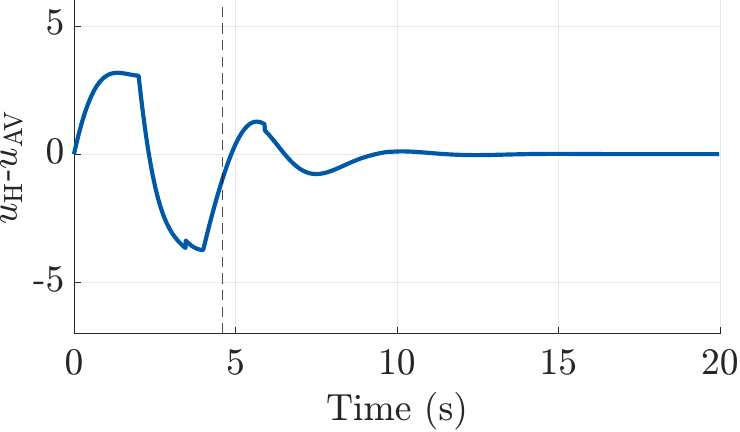}
        \label{fig:nominal_delta_u}
    }
    \qquad
    \subfigure[$u = u_\human+u_\av$]{
        \includegraphics[width=0.31\linewidth]{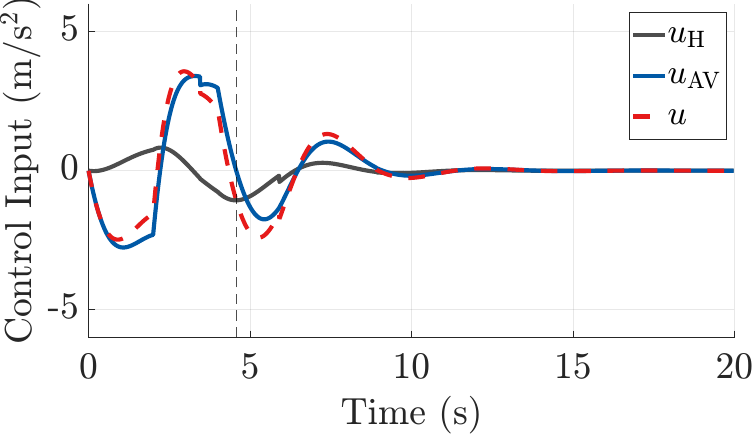}
        \label{fig:nominal_inputs}
    }
    \\
    Minimal intervention controller\\
    \subfigure[$\Delta u = u_\human-u_\av$]{
        \includegraphics[width=0.31\linewidth]{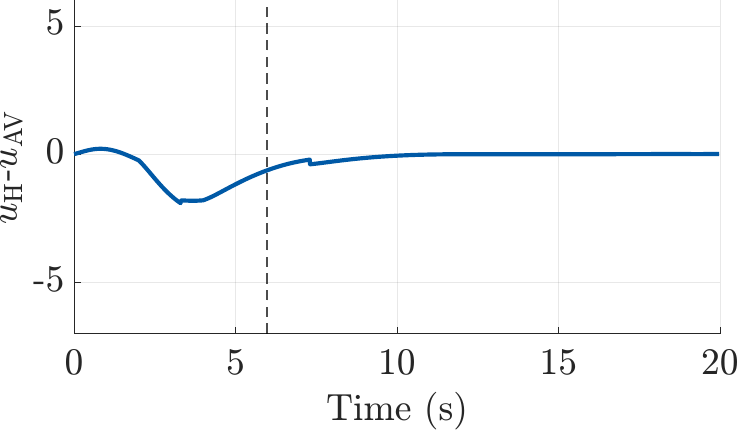}
        \label{fig:MIC_delta_u}
    }
    \qquad  
    \subfigure[$u = u_\human+u_\av$]{
        \includegraphics[width=0.31\linewidth]{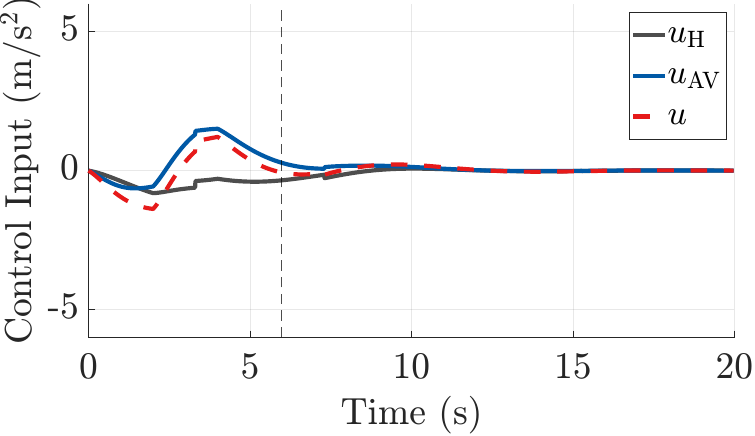}
        \label{fig:MIC_inputs}
    }
    \caption{Comparison of input difference $\Delta u =  u_\human-u_\av$ and control inputs $u = u_\human+u_\av$ between nominal and minimal intervention controllers.}
    \label{fig:delta_u_compare}
\end{figure}

\subsection{Sensitivity analysis of MIC with $\beta$}
The MIC introduces the effort weight $\beta$ to balance disturbance attenuation against automated effort, thereby adjusting the degree of driver authority. To analyze the trade-off, we run simulations over a uniform grid $\beta_{\text{grid}}=\{0.5,1.0,\ldots,5.0\}$. For each $\beta\in\beta_{\text{grid}}$, the MIC gains are synthesized by solving Eq.~\eqref{eq:gamma0_def}.

The relative automated intervention effort is quantified by an intervention ratio defined as:
\begin{align}    
r_{\mathrm{int}}&=\frac{E_\av}{E_\av+E_\human}, \\
E_\av=&\| u_\av \|_{\mathcal{L}_2(0,T)},\quad E_\human=\| u_\human \|_{\mathcal{L}_2(0,T)}.
\end{align}
Moreover, we define comfort indicators $\text{RMS}_{a\mathrm{\egov}}$ and $\text{RMS}_{a\mathrm{\followv}}$ to evaluate ride comfort of the ego-vehicle and the follower vehicle:
\begin{align}
    \text{RMS}_{a\mathrm{\egov}} = \sqrt{\frac{1}{T}\int_0^T a_{\egov}^2(t)\,dt}, \quad \text{RMS}_{a\mathrm{\followv}} = \sqrt{\frac{1}{T}\int_0^T a_{\followv}^2(t)\,dt}.
\end{align}
A smaller $\text{RMS}_a$ indicates smoother and more comfortable trajectories, while a larger value reflects stronger accelerations and reduced ride comfort. 

For each $\beta$, we run 100 Monte Carlo trials with randomized mode paths. Figures~\ref{fig:rint_beta}–\ref{fig:merge_beta} plot averaged $r_{\mathrm{int}}$, $\gamma_0$, $\gamma_{\mathrm{est}}$, $\text{RMS}_{a\mathrm{\egov}}$, $\text{RMS}_{a\mathrm{\followv}}$ and the lane-changing completion time $t_{\mathrm{LC}}$ of 100 experiments against different $\beta$ values. 
In Fig.~\ref{fig:rint_beta}, as $\beta$ increases, the automated effort ratio $r_{\mathrm{int}}$ decreases monotonically from about $0.75$ at $\beta=0.5$ to below $0.43$ at $\beta=5.0$, indicating that the control authority gradually reallocates toward the human driver. Compared with the nominal controller, which yields $r_{\mathrm{int}}$ around $0.76$, the MIC consistently produces smaller intervention ratios, with the stronger reduction at larger $\beta$.
The theoretical LMI bound $\gamma_0$ increases consistently from around the nominal level at $\beta = 1.0$ to more than twice that value at $\beta=5.0$ in Fig.~\ref{fig:gamma0_beta}, indicating that stronger penalties on automated effort lead to weaker theoretical stability margins. In comparison, the nominal controller maintains a tight bound with $\gamma_0$ below one. 
From Fig.~\ref{fig:gamma_est_beta}, $\gamma_{\mathrm{est}}$ shows a non-monotonic trend with respect to $\beta$. At $\beta=1.0$, $\gamma_{\mathrm{est}}$ reaches its minimum of about $0.80$, slightly below the nominal controller value of $0.85$. As $\beta$ increases to $2.0$ and $3.0$, $\gamma_{\mathrm{est}}$ stays below the nominal baseline (around $0.82$–$0.88$), still outperforming the nominal controller while allowing reduced automation effort. At $\beta=5.0$, however, $\gamma_{\mathrm{est}}$ surpasses the nominal baseline, indicating a degradation of stability performance despite further reductions in automated intervention.

As shown in Fig.~\ref{fig:rms_ae}, $\text{RMS}_{a\mathrm{\egov}}$ decreases as $\beta$ increases, dropping from roughly $0.8~\text{m/s}^2$ to below $0.5~\text{m/s}^2$ when $\beta = 2$, and then gradually levels off around $0.4~\text{m/s}^2$ for $\beta\ge 2$. Across the entire range, it remains well below the nominal controller baseline of approximately $1.2~\text{m/s}^2$, demonstrating that penalizing automated effort consistently improves the comfort of the ego-vehicle.
By contrast, Fig.~\ref{fig:rms_af} shows that at small $\beta$ (around $1.0$–$2.0$), $\text{RMS}_{a\mathrm{\followv}}$ is reduced to about $0.38$–$0.40\text{m/s}^2$, significantly lower than the nominal baseline of roughly $0.52~\text{m/s}^2$, indicating improved comfort for the follower vehicle. However, as $\beta$ increases further, $\text{RMS}_{a\mathrm{\followv}}$ starts rising, reaching nearly the nominal level by $\beta=5.0$. This suggests that when automated effort is penalized too strongly, the controller's stabilizing capacity diminishes, and the follower vehicle experiences greater fluctuations.

In Fig.~\ref{fig:merge_beta}, the average lane-changing time $t_{\mathrm{LC}}$ increases monotonically with $\beta$, since reduced automated assistance prolongs maneuver execution. Under the nominal controller, the average completion time is about $4.59$ s, whereas with MIC, it rises to $5.56$ s at $\beta=1.0$. As $\beta$ increases further to $2.0$ and $3.0$, the lane-changing time grows moderately longer (about $6.00$–$6.35$ s), and by $\beta=4.0$ it reaches $6.74$ s. At very high $\beta$, the maneuver may approach the simulation horizon without completing. To summarize, findings in Fig.~\ref{fig:beta_sweep} confirm the inherent trade-off: stronger penalties on automated effort shift control authority toward the driver and improve ego-vehicle ride comfort, but simultaneously weaken disturbance attenuation for the follower vehicle and prolong the maneuver relative to the nominal baseline. Moderate $\beta$ values around $2.0$–$3.0$ achieve the most favorable balance between reduced intervention, comfort, and timely lane-changing.

\begin{figure}[t]
    \centering
    \subfigure[Automated effort ratio $r_{\mathrm{int}}$]{
        \includegraphics[width=0.31\linewidth]{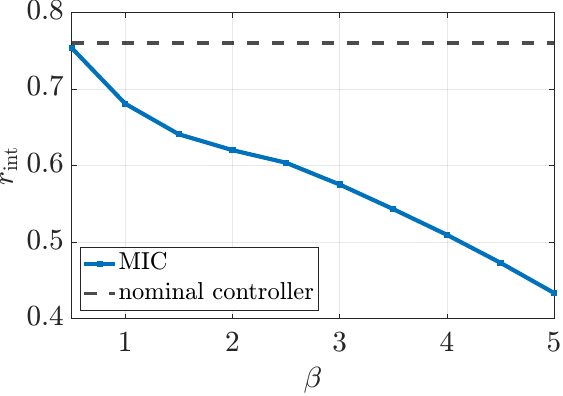}\label{fig:rint_beta}
    }
    \subfigure[Theoretical LMI bound $\gamma_0$]{
    \includegraphics[width=0.31\linewidth]{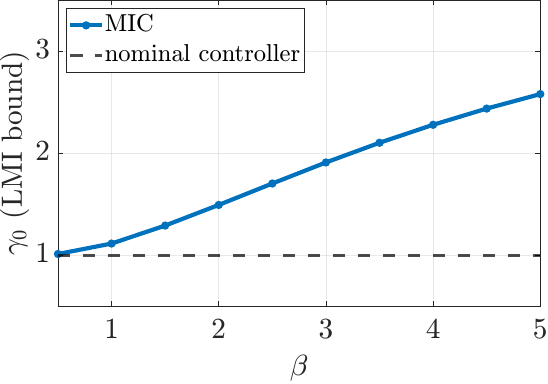}\label{fig:gamma0_beta}
    }
    \subfigure[Empirical gain $\gamma_{\mathrm{est}}$]{
    \includegraphics[width=0.31\linewidth]{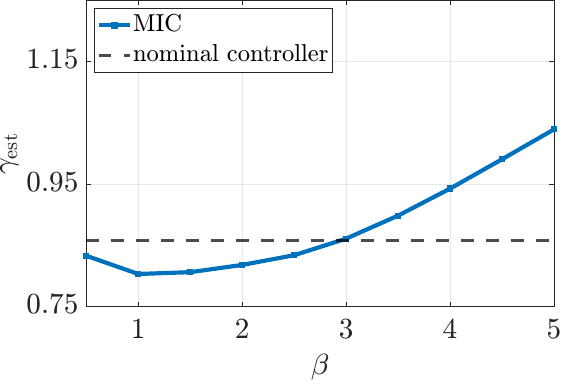}\label{fig:gamma_est_beta}
    }\\
    \subfigure[Ego-vehicle comfort $\text{RMS}_{a\mathrm{\egov}}$]{
    \includegraphics[width=0.31\linewidth]{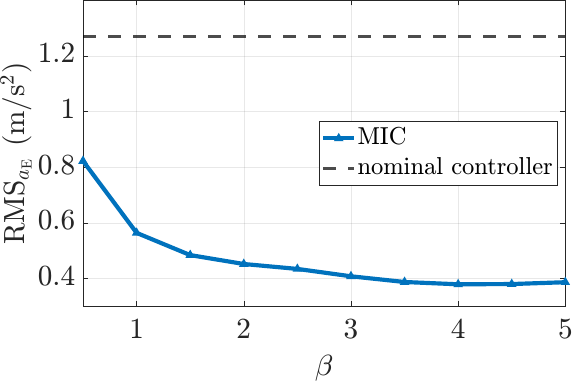}\label{fig:rms_ae}
    }
    \subfigure[Follower vehicle comfort $\text{RMS}_{a\mathrm{\followv}}$]{
    \includegraphics[width=0.31\linewidth]{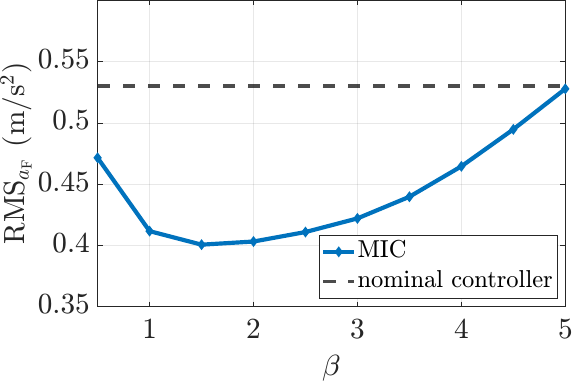}\label{fig:rms_af}
    }
    \subfigure[Lane-changing completion time $t_{\mathrm{LC}}$]{
        \includegraphics[width=0.31\linewidth]{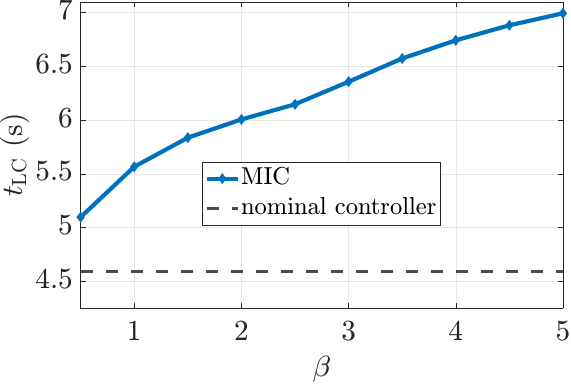}\label{fig:merge_beta}
    }
    \caption{Results under MIC parameter $\beta$. In all plots, solid curves show MIC results and horizontal dashed lines show the nominal controller.}    \label{fig:beta_sweep}
\end{figure}

\subsection{Validation with real-world trajectories}\label{sec:raw_data}

\subsubsection{Human-only control vs. Shared control}
To further assess the proposed controller under realistic conditions, we use a lane-changing instance from the NGSIM I-80 dataset. Specifically, a 20-second window of data is extracted that contains a complete lane-change event, including the trajectories of the ego-vehicle, the leader vehicle, and the follower vehicle.

\begin{figure}[t]
    \centering
    Human-only control\\
    \subfigure[Acceleration $a$]{
        \includegraphics[width=0.31\linewidth]{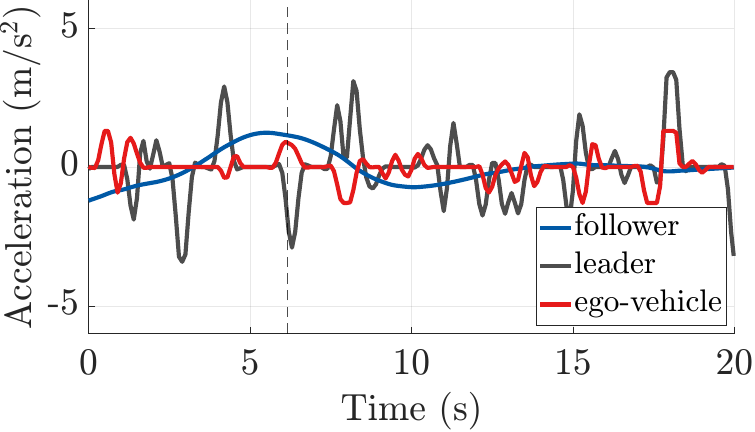}
        \label{fig:N_human_acc}
    }
    \subfigure[Speed $v$]{
        \includegraphics[width=0.31\linewidth]{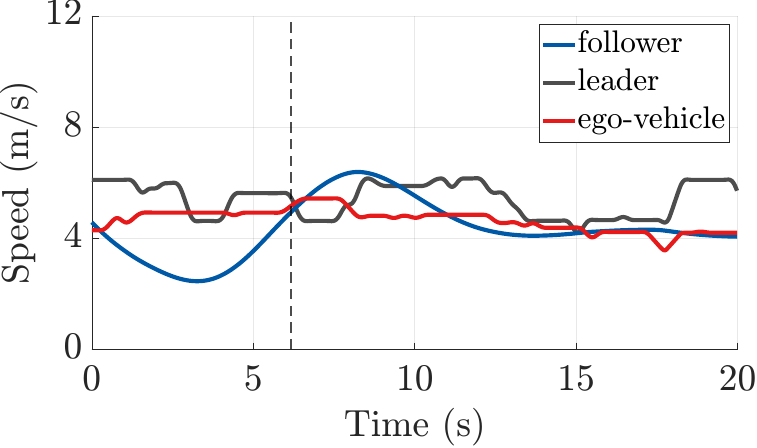}
        \label{fig:N_human_speed}
    }
    \subfigure[Gap $s$]{
        \includegraphics[width=0.31\linewidth]{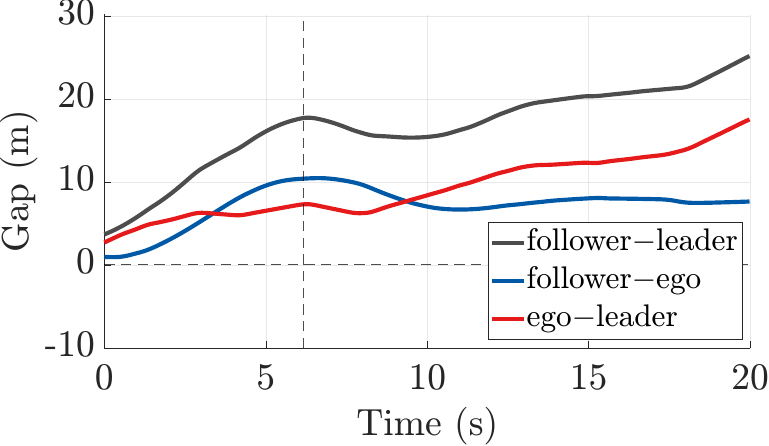}
        \label{fig:N_human_gaps}
    }\\
    Shared control with MIC\\
    \subfigure[Acceleration $a$]{
        \includegraphics[width=0.31\linewidth]{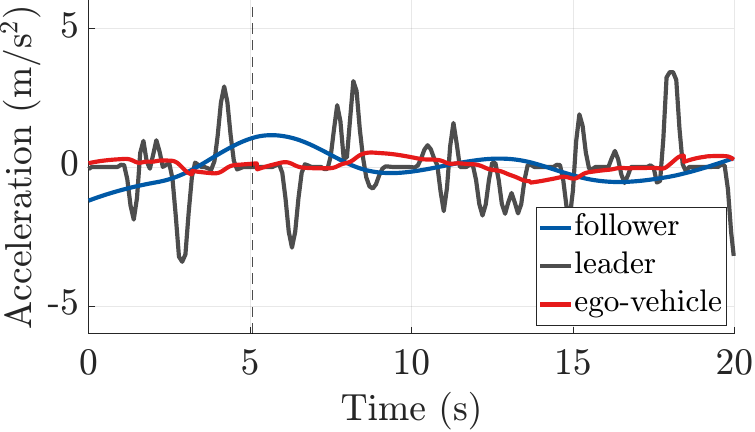}
        \label{fig:N_MIC_acc}
    }
    \subfigure[Speed $v$]{
        \includegraphics[width=0.31\linewidth]{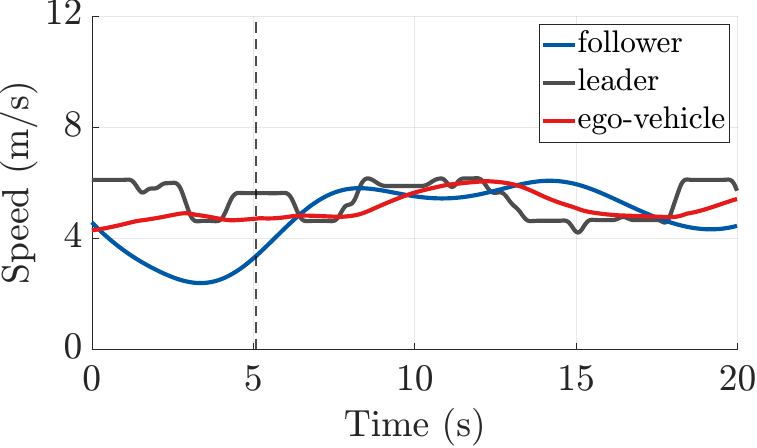}
        \label{fig:N_MIC_speed}
    }
    \subfigure[Gap $s$]{
        \includegraphics[width=0.31\linewidth]{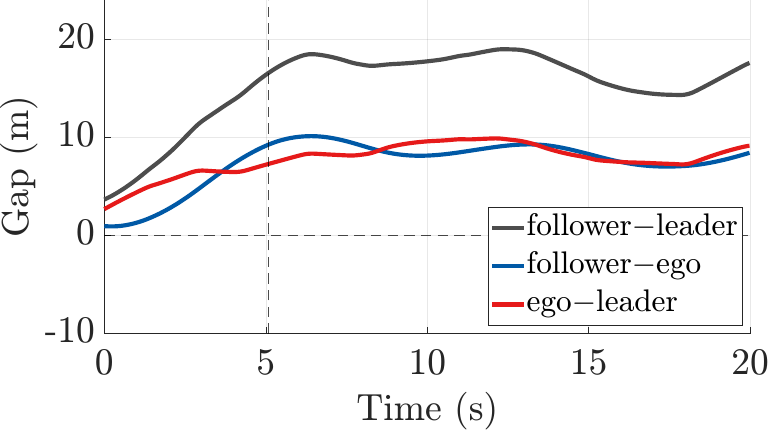}
        \label{fig:N_MIC_gaps}
    }
    \caption{Trajectories of vehicles under human-only control and shared control with MIC (with NGSIM trajectories).}
    \label{fig:N_human_MIC}
\end{figure}

We compare human-only control with the MIC in Fig.~\ref{fig:N_human_MIC}. As shown in Figs.~\ref{fig:N_human_acc}-\ref{fig:N_human_gaps}, with human-only control, the ego-vehicle exhibits pronounced fluctuations in acceleration, with alternating mild braking and re-acceleration to adjust its gap from the follower vehicle. The ego-vehicle completes the lane change at $t=6.17$ s while slightly accelerating to enlarge the spacing from the follower vehicle. 
The follower vehicle exhibits oscillations in both acceleration and speed, with the speed dropping to about $2.5~\text{m/s}$ at $t=3$ s, rising to nearly $6.0~\text{m/s}$ at $t=8$ s, and then falling to about $4~\text{m/s}$. These amplified responses result in an empirical gain of $\gamma_{\mathrm{est}}=2.276$. The comfort indicators are $\text{RMS}_{a\mathrm{\egov}}=0.47~\text{m/s}^2$ and $\text{RMS}_{a\mathrm{\followv}}=0.58~\text{m/s}^2$.
With MIC, the assistive controller helps the ego-vehicle maintain its initial speed and coordinate the gaps to both leader vehicle and follower vehicle, enabling the lane change to be completed earlier at $t=5.08$ s and the automated effort ratio remains moderate with $r_{\mathrm{int}}=0.54$. As shown in Figs.~\ref{fig:N_MIC_acc}-\ref{fig:N_MIC_gaps}, the ego-vehicle makes smoother adjustments in acceleration and velocity. The follower vehicle's speed variation is mitigated, showing one noticeable peak of about $2.5~\text{m/s}$ at $t=3$ s. This results in a smaller empirical gain of $\gamma_{\mathrm{est}}=2.016$. The comfort indicators further confirm the improvement, with $\text{RMS}_{a\mathrm{\egov}}=0.26~\text{m/s}^2$ and $\text{RMS}_{a\mathrm{\followv}}=0.53~\text{m/s}^2$. Although both empirical gains remain above one within the finite simulation horizon, the MIC limits intervention to moderate adjustments, thereby preserving driver authority, ensuring smoother motion, and enabling more efficient lane changing.

\subsubsection{Automation-only control vs. Shared control}
We further validate the effectiveness of the proposed controller using the Third Generation Simulation (TGSIM) dataset collected by moving cameras on the I–90/94 corridor \cite{ammourah2025introduction}. From this dataset, we select lane-changing events in which the ego-vehicle is an SAE Level 2 automated vehicle while the surrounding vehicles (leader vehicle and follower vehicle) are human-driven. Current L2 systems implement automation-only control, where the automated controller takes full charge until the driver disengages, so human and automation cannot operate simultaneously. This makes L2 automated system a suitable baseline for comparison with our shared control framework, which allows shared human and automation inputs.

We first calibrate the human driving model from HV lane-change cases. Similar to the NGSIM analysis, the ego-vehicle is represented by a mode-dependent OVM with Markov mode switching governed by a TD-based thresholding rule, and the transition rates are estimated from multiple trajectories. 
The calibrated parameters and TD thresholds are summarized in Table~\ref{tab:tgsim_all}. For the follower vehicle, we adopt the same leader-selecting OVM formulation as in Eq.~\eqref{eq:fv_model}, yielding calibrated parameters of $v_{\max}=25.00~\text{m/s}$, $s_{\mathrm{st}}=5.00~\text{m}$, $s_{\mathrm{go}}=25.00~\text{m}$, $a=0.12$, and $b=0.11$.

For simulation, we replay the leader vehicle trajectory directly from the TGSIM data, while the follower vehicle is simulated using the calibrated OVM model. The ego-vehicle is modeled in two ways: (i) as an L2 automated vehicle, and (ii) as a shared control vehicle that combines the calibrated human OVM with the minimal intervention controller (MIC). In the shared control case, the estimator applies $\hat{\eta}(t)$ with $\alpha=0.05$ and $q=0.02~\text{s}^{-1}$ to select the feedback pair $(K_{\av,k},D_{\av,k})$. The lane-changing condition is declared satisfied when both the ego–leader gap $s_{\egov\leadv}$ and the ego-follower gap $s_{\followv\egov}$ exceed the minimum thresholds extracted from the dataset, with calibrated thresholds $s_{\text{rear,thr}}=10.9$ m and $s_{\text{front,thr}}=12.1$ m.

\begin{table}[t]
\centering
\begin{threeparttable}
\caption{Calibrated driver parameters from the TGSIM dataset}
\label{tab:tgsim_all}
\begin{tabularx}{0.9\linewidth}{@{\hspace{6pt}} l @{\extracolsep{\fill}} c c c l @{\hspace{6pt}}}
\toprule
Category & Symbol & Value & Unit & Description \\
\midrule
TD parameters & $T_{\text{des}}$ & 1.18 & s   & Desired headway \\
    & $\delta$         & 0.25 & -  & Risk sensitivity \\
    & $\zeta$         & 1.0 & -  & Exponent for TD scaling \\
\midrule
Markov transition rates & $\lambda_{11}$ & -0.0171 & s$^{-1}$ & Leaving rate from low TD \\
    & $\lambda_{12}$ & 0.0171 & s$^{-1}$ & Switching $1 \to 2$ \\
    & $\lambda_{21}$ & 0.0771 & s$^{-1}$ & Switching $2 \to 1$ \\
    & $\lambda_{22}$ & -0.0771 & s$^{-1}$ & Leaving rate from high TD \\
\midrule
Mode-dependent OVM & $a_1$ & 0.20 & s$^{-1}$  & Sensitivity to desired speed ($\eta=1$) \\
    & $b_1$ & 0.10 & s$^{-1}$  & Sensitivity to relative speed ($\eta=1$) \\
    & $a_2$ & 0.11 & s$^{-1}$  & Sensitivity to desired speed ($\eta=2$) \\
    & $b_2$ & 0.18 & s$^{-1}$  & Sensitivity to relative speed ($\eta=2$) \\
    & $s_{\mathrm{st}}$ & 5.00 & m   & Stopping distance ($\eta=1,2$) \\
    & $s_{\mathrm{go}}$ & 35.00 & m   & Go distance ($\eta=1,2$)\\
    & $v_{\max}$ & 20.00 & m/s & Maximum desired speed ($\eta=1,2$)\\
\bottomrule
\end{tabularx}
\begin{tablenotes}
\footnotesize
\item TD and Markov rates are calibrated from TGSIM lane-change trajectories.  
\item OVM parameters are estimated by forward-simulation fitting, showing that when $\eta = 1$, driver is more sensitive to the desired speed, while in $\eta = 2$ driver is sensitive to the relative speed.
\end{tablenotes}
\end{threeparttable}
\end{table}

\begin{figure}[t]
    \centering
    Automation-only control\\
    \subfigure[Acceleration $a$]{
        \includegraphics[width=0.31\linewidth]{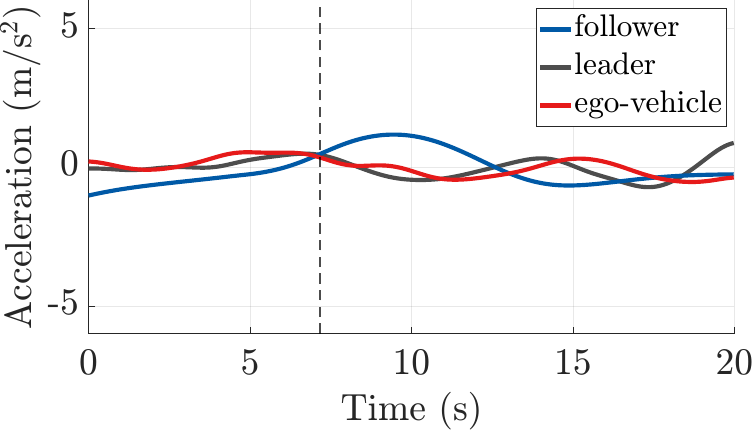}
        \label{fig:TG_ACC_case1_acc}
    }
    \subfigure[Speed $v$]{
        \includegraphics[width=0.31\linewidth]{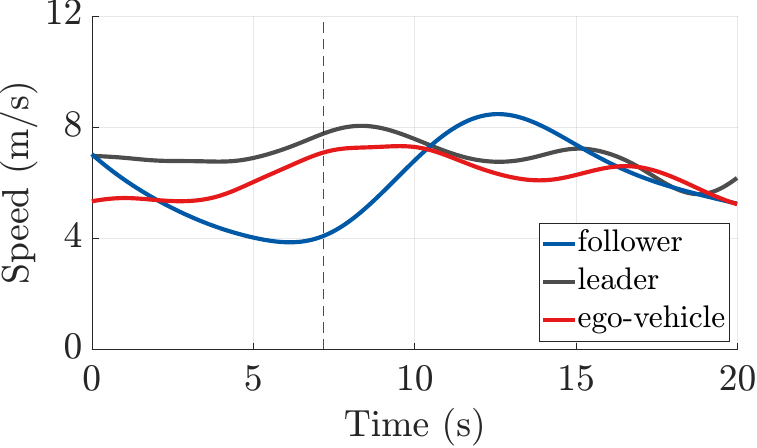}
        \label{fig:TG_ACC_case1_speed}
    }
    \subfigure[Gap $s$]{
        \includegraphics[width=0.31\linewidth]{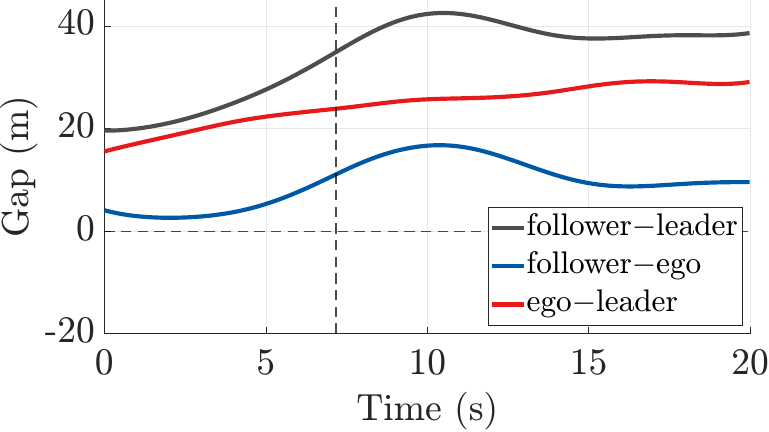}
        \label{fig:TG_ACC_case1_gaps}
    }\\
    Shared control with MIC\\
    \subfigure[Acceleration $a$]{
        \includegraphics[width=0.31\linewidth]{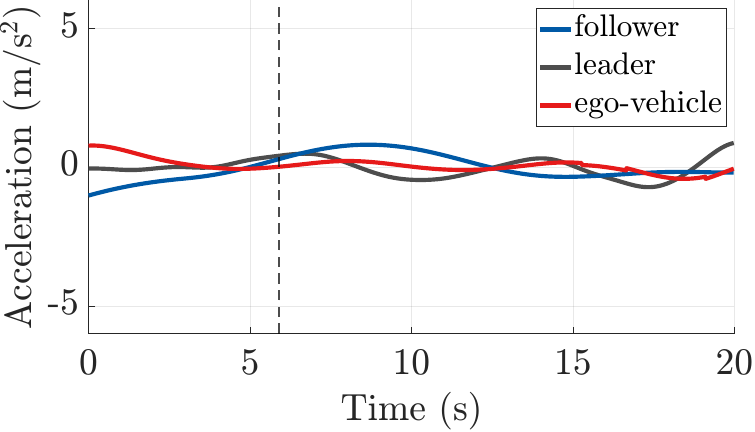}
        \label{fig:TG_MIC_case1_acc}
    }
    \subfigure[Speed $v$]{
        \includegraphics[width=0.31\linewidth]{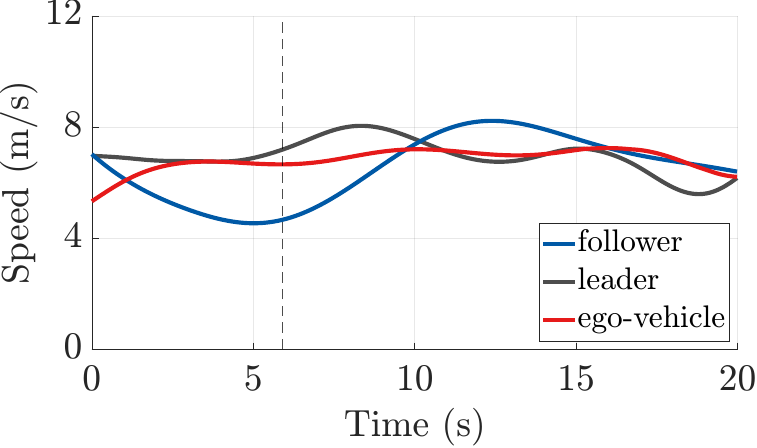}
        \label{fig:TG_MIC_case1_speed}
    }
    \subfigure[Gap $s$]{
        \includegraphics[width=0.31\linewidth]{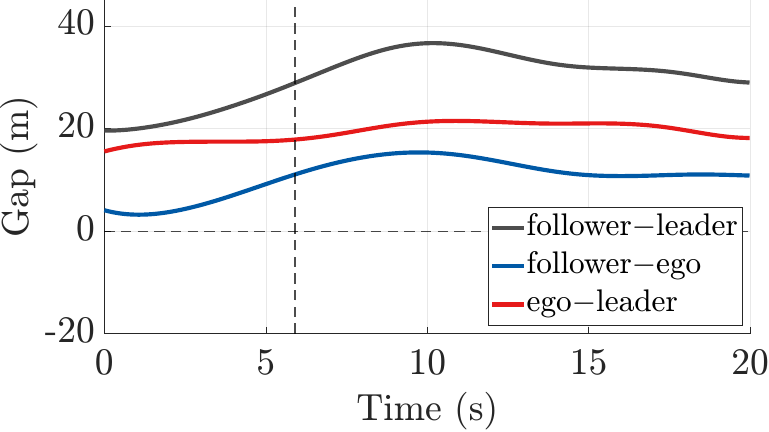}
        \label{fig:TG_MIC_case1_gaps}
    }
    \caption{Trajectories of vehicles under automation-only control and shared control with MIC (with TGSIM Case 1).}
    \label{fig:ACC_MIC_case1}
\end{figure}

\begin{figure}[t]
    \centering
    Automation-only control\\
    \subfigure[Acceleration $a$]{
        \includegraphics[width=0.31\linewidth]{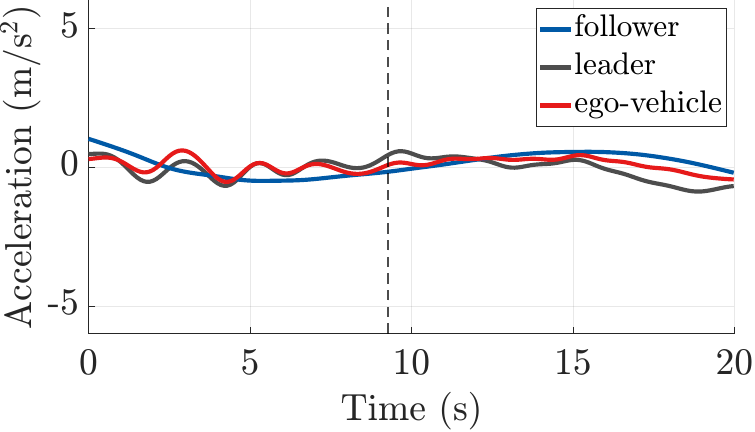}
        \label{fig:TG_ACC_case2_acc}
    }
    \subfigure[Speed $v$]{
        \includegraphics[width=0.31\linewidth]{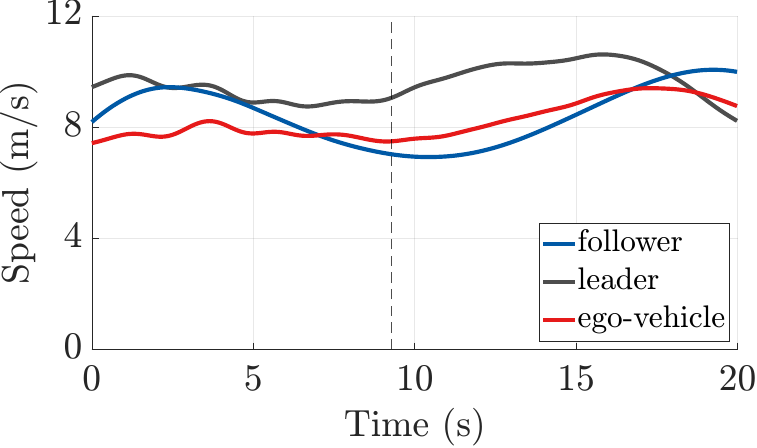}
        \label{fig:TG_ACC_case2_speed}
    }
    \subfigure[Gap $s$]{
        \includegraphics[width=0.31\linewidth]{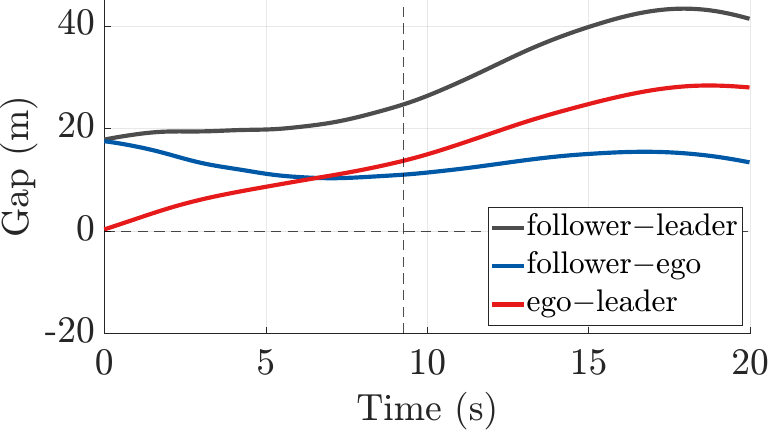}
        \label{fig:TG_ACC_case2_gaps}
    }\\
    Shared control with MIC\\
    \subfigure[Acceleration $a$]{
        \includegraphics[width=0.31\linewidth]{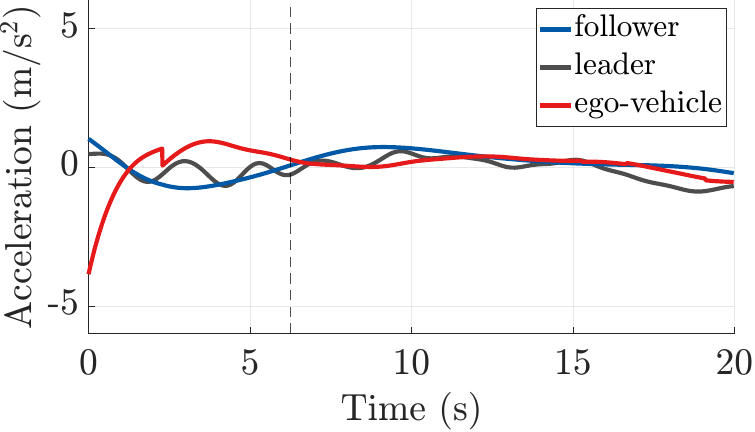}
        \label{fig:TG_MIC_case2_acc}
    }
    \subfigure[Speed $v$]{
        \includegraphics[width=0.31\linewidth]{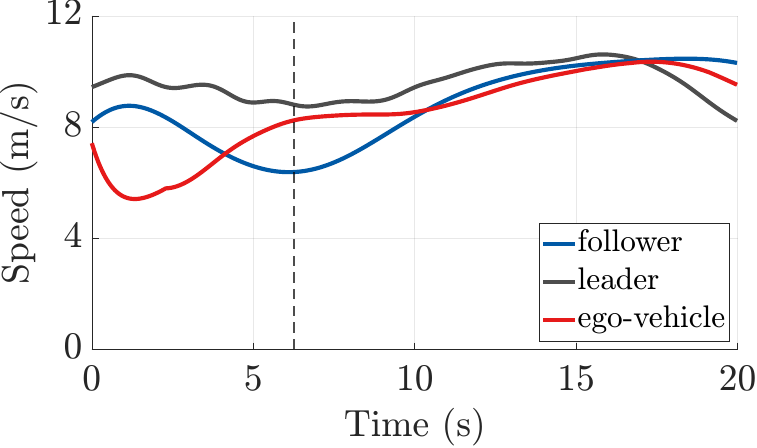}
        \label{fig:TG_MIC_case2_speed}
    }
    \subfigure[Gap $s$]{
        \includegraphics[width=0.31\linewidth]{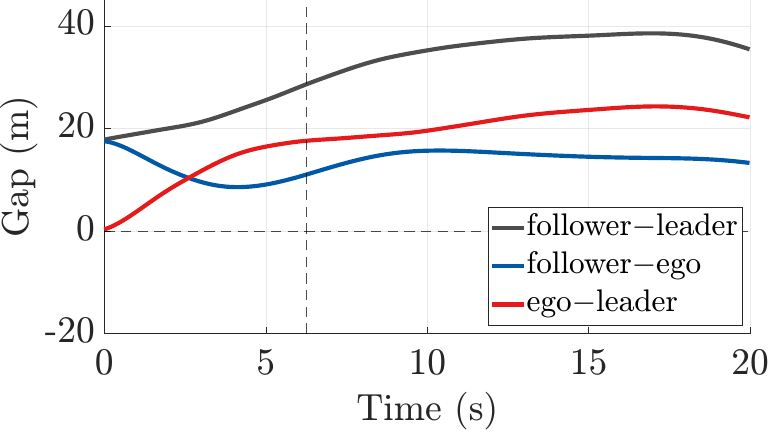}
        \label{fig:TG_MIC_case2_gaps}
    }
    \caption{Trajectories of vehicles under automation-only control and shared control with MIC (with TGSIM Case 2).}
    \label{fig:ACC_MIC_case2}
\end{figure}

We run two cases with different initial configurations to examine the effect of shared control relative to automation-only control. In Case 1 in Fig.~\ref{fig:ACC_MIC_case1}, the ego-vehicle is closer to the follower vehicle initially. The ego-vehicle's initial speed is at approximately $7~\text{m/s}$, equal to the leader vehicle's initial speed, while the follower vehicle starts at a lower speed of about $5~\text{m/s}$. Under automation-only control in Figs.~\ref{fig:TG_ACC_case1_acc}–\ref{fig:TG_ACC_case1_gaps}, the L2 automated system regulates the ego-vehicle primarily by tracking the leader vehicle, with the follower vehicle simply following behind, leading to a relatively stable and smooth speed profile, with the gain $\gamma_{\mathrm{est}}=0.8612$. However, this behavior does not explicitly account for gap coordination with the follower vehicle, so the lane change is delayed, with a completion time of $t=7.17$ s. By contrast, the MIC introduces an early acceleration strategy at the beginning of the maneuver, which enlarges the spacing to the follower vehicle, as shown in Figs.~\ref{fig:TG_MIC_case1_acc}-\ref{fig:TG_MIC_case1_gaps}. This cooperative adjustment enables the ego-vehicle to reach the lane-changing condition much earlier, with a lane-changing time of $t=5.90$ s. The stability performance is slightly weaker than automation-only, with $\gamma_{\mathrm{est}}=0.9340$. Nevertheless, the automated effort ratio is $r_\mathrm{int}=0.51$, indicating that MIC limits intervention and preserves driver authority. 

In Case 2 in Fig.~\ref{fig:ACC_MIC_case2}, the ego-vehicle is initially closer to the leader vehicle, and its initial speed is slightly lower than both leader vehicle and follower vehicle. Under automation-only control, as shown in Figs.~\ref{fig:TG_ACC_case2_acc}–\ref{fig:TG_ACC_case2_gaps}, the ego-vehicle mainly follows the leader vehicle's motion, maintaining similar accelerations, and the follower vehicle also follows the ego-vehicle's motion, so the speed profile is also relatively stable with an empirical gain of $\gamma_{\mathrm{est}}=0.6476$, but the maneuver is prolonged and the lane change completes only at $t=9.27$ s.
With MIC, the ego-vehicle initiates an early deceleration, as shown in Fig.~\ref{fig:TG_MIC_case2_acc}, which enlarges the spacing to the leader vehicle in Fig.~\ref{fig:TG_MIC_case2_gaps} and creates more favorable lane-changing conditions. This adjustment shortens the lane-changing time to $t=6.26$ s. However, the stability performance is decreased, with $\gamma_{\mathrm{est}}=0.9363$. The automated effort ratio of $r_{\mathrm{int}}=0.60$ indicates that MIC still limits automation involvement, achieving faster execution but at the cost of increased fluctuations.

\section{Conclusion}
In this paper, we designed shared controller for lane changes. We modeled the lane-changing process under human–automation shared control as a Markov jump linear system with imperfect mode observation, and established a corresponding stochastic $\mathcal{L}_2$ string stability criterion. A nominal controller was synthesized via LMIs to stabilize the longitudinal dynamics under stochastic human variability, converting an unstable human-only baseline into a stochastically $\mathcal{L}_2$ string stable system. Building on this, we developed a minimal intervention controller that augments the performance output with an automated effort penalty, allowing the trade-off between disturbance attenuation and automated effort to be explicitly tuned by a weight $\beta$. Using calibrated human models from the NGSIM dataset, simulations confirmed that the nominal controller ensures stochastic $\mathcal{L}_2$ string stability, while the MIC introduces a tunable balance. A sensitivity analysis relative to the nominal case revealed a clear trade-off among stability, efficiency, and driver authority: increasing $\beta$ reduces automated effort and improves comfort but weakens stability and prolongs lane-changing time. With moderate $\beta$, the MIC significantly suppresses disturbance propagation while reducing automated effort and improving ride comfort, thereby preserving driver authority compared with the human-only baseline. Validations with TGSIM lane-change cases further showed that, compared with SAE Level 2 automated vehicles, the MIC enables earlier and more efficient lane changes and preserves driver authority, while exhibiting slightly reduced stability. Overall, these findings highlight the potential of shared control strategies to balance stability, efficiency, and driver acceptance in stochastic traffic environments. Future research could extend this framework to multi-vehicle interactions and incorporate more sophisticated human behavior models, aiming to enhance robustness and adaptability in complex traffic scenarios.

\appendices
\section{Proof of Theorem~\ref{thm:thm1}}\label{app:proof1}
\renewcommand\theequation{A.\arabic{equation}}
\setcounter{equation}{0}
We provide the proof of Theorem~\ref{thm:thm1} by constructing a quadratic Lyapunov functional of the form:
\begin{align}
    V(\tilde{x}(t),Z(t)) = \tilde{x}^\top(t)  P(Z(t)) \tilde{x}(t), \quad P(Z(t)) > 0.
\end{align}
where $0<P(Z(t)) \in \mathbb{R}^{n \times n}$, $Z(t) \in \mathcal{V}$. For simplicity, we denote $P(Z(t))=P_{ik}$ and $V(\tilde{x}(t),Z(t))=V(\tilde{x}(t),(i,k))$ when $Z(t)=(\eta(t), \hat{\eta}(t)) = (i, k) \in \mathcal{V}$.

Given $Z(t)=(i, k)$ and a small positive scalar $\Delta t$, an operator $\mathcal{L}$ is defined to represent the weak infinitesimal generator:
\begin{align}
    \mathcal{L}V(\tilde{x}(t),Z(t)=(i, k))=\lim_{\Delta t \to 0}\frac{1}{\Delta t}[\mathbb{E}\{V(\tilde{x}(t+\Delta t),Z(t+\Delta t))\} - V(\tilde{x}(t),(i, k)) ].
\end{align}
We have
\begin{align}
    \tilde{x}(t+\Delta t) = [I+A_{ik}\Delta t]\tilde{x}(t)+D_{ik}\Delta t \tilde{v}_\leadv(t)+\tilde{o}(\Delta t)
\end{align}
where $\tilde{o}(\Delta t) \in \mathbb{R}^n$ and satisfies $\lim_{\Delta t \to 0} (\tilde{o}(\Delta t)/{\Delta t})=0_{n \times 1}$; $A_{ik} = A+B(K_{\human,i}+K_{\av,k})$, $D_{ik} = D+B(D_{\human,i}+D_{\av,k})$. We obtain:
\begin{align}
    &\mathcal{L}{V}(\tilde{x}(t), (i,k)) \nonumber\\
    =& \lim_{\Delta t \to 0}\frac{1}{\Delta t}[\mathbb{E}\{ V(\tilde{x}(t+\Delta t), (j,l))\} - V(\tilde{x}(t), (i,k))]
    \nonumber \\
    =& \lim_{\Delta t \to 0}\frac{1}{\Delta t}[\mathbb{E}\{ \sum_{(j,l) \neq (i,k)} \nu_{(i,k)(j,l)} \Delta t (\tilde{x}^\top (t)(I+A_{ik}\Delta t)^\top  +\tilde{v}_{\leadv}^\top (t)D_{ik}^\top \Delta t )P_{jl}( (I+A_{ik}\Delta t)\tilde{x}(t)
    + D_{ik}\Delta t \tilde{v}_\leadv(t)) \nonumber\\
    &+(1+\nu_{(i,k)(i,k)}\Delta t)(\tilde{x}^\top (t)(I+A_{ik}\Delta t)^\top + \tilde{v}_{\leadv}^\top (t)D_{ik}^\top \Delta t)P_{ik}((I+A_{ik}\Delta t)\tilde{x}(t)+ D_{ik}\Delta t \tilde{v}_\leadv(t)) \}  \nonumber \\
    &- \tilde{x}^\top (t)P_{ik}\tilde{x}(t) + o(\Delta t)]\nonumber\\
    =& \tilde{x}^\top(t)  \left(A_{ik}^\top  P_{ik} + P_{ik} A_{ik} \right) \tilde{x}(t) + \tilde{x}^\top(t)  P_{ik}D_{ik}\tilde{v}_\leadv(t) + \tilde{v}_\leadv^\top(t) D_{ik}^\top P_{ik}\tilde{x}(t)+ \tilde{x}^\top(t) ( \sum_{(j,l)\in \mathcal{V}} \nu_{(i,k)(j,l)} P_{jl} ) \tilde{x}(t).
\end{align}

Suppose that there are matrices $X_{ik}>0$, $G_k$ and $V_k$, and scalar $\varepsilon>0$ with $i\in \mathcal{N}$ and $k\in \mathcal{M}$ such that \eqref{eq:LMI0} to \eqref{eq:LMI_con5} are satisfied. We know that $G_k$ is non-singular. By setting $K_{\av,k} = V_kG_k^{-1}$ and $D_{\av,k} = L_k$, we get that
\begin{align}
    \Omega_{ik} &= (A+BK_{\human,i})G_k + BV_k \nonumber\\
                &= (A+BK_{\human,i}+BK_{\av,k})G_k \\
    \Phi_{ik} &= D+BD_{\human,i}+BD_{\av,k}
\end{align}
Thus we conclude that \eqref{eq:LMI0} is equivalent to 
\begin{align}\label{eq:proof1}
    \begin{bmatrix}
        \nu_{(i,k)(i,k)}X_{ik} & \Phi_{ik} & 0 &X_{ik} & X_{ik}\Pi_{ik} \\
        \Phi_{ik}^\top & -\gamma^2I & 0 &0 & 0\\
        0 & 0 & -I & 0 & 0\\
        X_{ik}^\top & 0 &0 & 0 & 0\\
        \Pi_{ik}^\top X_{ik}^\top &0 & 0 & 0 & -\mathcal{D}_{ik}
    \end{bmatrix} + \mathrm{Her}\left(\begin{bmatrix}
         (A+BK_{\human,i}+BK_{\av,k})\\
        0 \\
        C \\
        -I \\
        0
    \end{bmatrix} G_k\begin{bmatrix}
        \varepsilon I\\
        0\\
        0\\
        I\\
        0
    \end{bmatrix}^\top \right)<0
\end{align}
We recall that $A_{ik} = A+B(K_{\human,i}+K_{\av,k})$, $D_{ik} = D+B(D_{\human,i}+D_{\av,k})$ and notice that
\begin{align}
    \begin{bmatrix}
        A_{ik}^\top & 0 & C^\top & -I & 0
    \end{bmatrix}W = 0
\end{align}
where $W$ is defined as
\begin{align}
    W = \begin{bmatrix}
        I & 0 & 0 & 0\\
        0 & I & 0 & 0\\
        0 & 0 & I & 0\\
        A_{ik}^\top & 0 & C^\top & 0 \\
        0 & 0 & 0 & I
    \end{bmatrix}
\end{align}
We have from the Finsler's Lemma that \eqref{eq:proof1} is equivalent to
\begin{align}
    W^\top \begin{bmatrix}
        \nu_{(i,k)(i,k)}X_{ik} & \Phi_{ik} & 0 &X_{ik} & X_{ik}\Pi_{ik} \\
        \Phi_{ik}^\top & -\gamma^2I & 0 &0 & 0\\
        0 & 0 & -I & 0 & 0\\
        X_{ik}^\top & 0 &0 & 0 & 0\\
        \Pi_{ik}^\top X_{ik}^\top &0 & 0 & 0 & -\mathcal{D}_{ik}
    \end{bmatrix} W<0
\end{align}
resulting in
\begin{align}
    \begin{bmatrix}
        \nu_{(i,k)(i,k)}X_{ik}+\mathrm{Her}(X_{ik}A_{ik}^\top)& \Phi_{ik}  &X_{ik}C^\top & X_{ik}\Pi_{ik} \\
        \Phi_{ik}^\top & -\gamma^2I & 0  & 0\\
        CX_{ik}^\top & 0  & -I & 0\\
        \Pi_{ik}^\top X_{ik}^\top &0 & 0  & -\mathcal{D}_{ik}
    \end{bmatrix}<0.
\end{align}
By applying Schur Complement, we get
\begin{align}\label{eq:proof2}
    \begin{bmatrix}
        \Gamma_{ik} & \Phi_{ik} & X_{ik} C^\top\\
        \Phi_{ik}^\top & -\gamma^2 I & 0 \\
        C X_{ik}^\top & 0 & -I
    \end{bmatrix}<0
\end{align}
with 
\begin{align}
    \Gamma_{ik} = \nu_{(i,k)(i,k)}X_{ik}&+X_{ik}A_{ik}^\top+A_{ik} X_{ik}^\top +X_{ik}\Pi_{ik}\mathcal{D}_{ik}^{-1}\Pi_{ik}^\top X_{ik}^\top.
\end{align}
Since we define $\mathcal{V}_{(i,k)}$, $\Pi_{ik}$ and $\mathcal{D}_{ik}$ as follows:
\begin{align}
    \begin{cases}
        \mathcal{V}_{(i,k)} &= \{ (j,l)\in \mathcal{V}; (j,l) \ne (i,k)\text{ and } \nu_{(i,k)(j,l)} \ne 0\} \nonumber\\
     &= \{r_{(i,k)}(1),\ldots,r_{(i,k)}(\tau_{(i,k)}); r_{(i,k)}(\iota)\in \mathcal{V},\iota  = 1,\ldots, \tau_{(i,k)}\} \\
    \Pi_{ik} =& \begin{bmatrix}
        \sqrt{\nu_{(i,k)r_{(i,k)}(1)}}I & \ldots & \sqrt{\nu_{(i,k)r_{(i,k)}(\tau_{(i,k)})}}I
    \end{bmatrix}\\
    \mathcal{D}_{ik}=& \mathrm{diag}(X_{r_{(i,k)}(1)},\ldots,X_{r_{(i,k)}(\tau_{(i,k)})})
    \end{cases}
\end{align}
Let $P_{ik} = X_{ik}^{-1}$, so we obtain: 
\begin{align}
    \Pi_{ik}\mathcal{D}_{ik}^{-1}\Pi_{ik} &= \sum_{(j,l)\in \mathcal{V}_{(i,k)}} \nu_{(i,k)(j,l)}X_{jl}^{-1} \nonumber\\
    & = \sum_{(j,l)\in \mathcal{V}} \nu_{(i,k)(j,l)} X_{jl}^{-1} - \nu_{(i,k)(i,k)}X_{ik}^{-1} \nonumber\\
    & = \sum_{(j,l)\in \mathcal{V}} \nu_{(i,k)(j,l)} P_{jl}- \nu_{(i,k)(i,k)}P_{ik}
\end{align}
Now by pre- and post-multiplying \eqref{eq:proof2} by $\mathrm{diag}(X_{ik}^{-1},I,I)$ and substituting $X_{ik}^{-1}$ by $P_{ik}$ we get that
\begin{align}\label{eq:proof3}
&\begin{bmatrix}
A_{ik}^\top  P_{ik} + P_{ik} A_{ik} + \sum_{(j,l)} \nu_{(i,k)(j,l)} P_{jl}& P_{ik} D_{ik} & C^\top\\
D_{ik}^\top  P_{ik} & -\gamma^2 I &0 \\
C & 0 & -I
\end{bmatrix} < 0,
\end{align}

Then we need to prove that \eqref{eq:proof3} is sufficient to stochastic $\mathcal{L}_2$ string stability. 
We conclude from \eqref{eq:proof3} that:
\begin{align}
   & \tilde{x}^\top(t)  (A_{ik}^\top  P_{ik} + P_{ik} A_{ik} +  \sum_{(j,l)\in \mathcal{V}} \nu_{(i,k)(j,l)} P_{jl} +C^\top C ) \tilde{x}(t) \nonumber \\
   &+ \tilde{x}^\top (t)P_{ik}D_{ik}\tilde{v}_\leadv(t) + \tilde{v}_\leadv^\top(t) D_{ik}^\top P_{ik}\tilde{x}(t) -\gamma^2\tilde{v}_\leadv^\top(t)  \tilde{v}_\leadv(t)<0,
\end{align}
so we have:
\begin{align}\label{eq:lya}
&\mathcal{L}{V}(\tilde{x}(t), (i,k)) + \mathbb{E} \left\{ \tilde{x}^\top (t) C^\top  C \tilde{x}(t) \right\} - \gamma^2 \mathbb{E} \left\{ \tilde{v}_\leadv^\top(t)  \tilde{v}_\leadv(t) \right\} < 0.
\end{align} 
Integrating \eqref{eq:lya} from 0 to $t$ yields
\begin{align}
    \mathbb{E}[V(\tilde{x}(t),(i,k))]&-V(\tilde{x}(0),Z(0)) + \mathbb{E} \left[ \int_0^t \tilde{z}^\top(s) \tilde{z}(s) ds \right] \leq \gamma^2 \mathbb{E} \left[ \int_0^t \tilde{v}_\leadv^\top(s) \tilde{v}_\leadv(s) ds \right]
\end{align}
Under the initial state $\tilde{x}(0)= 0$, we have $V (\tilde{x}(0), Z(0))= 0$. Since $\mathbb{E}\{V (\tilde{x}(t), (i,k) )\} > 0$ for any $\tilde{x}(t) \neq 0$, we have
\begin{align}
\mathbb{E} \left\{ \int_{0}^{\infty} \tilde{z}^\top(s)  \tilde{z}(s) ds \right\} \leq \gamma^2 \mathbb{E} \left\{ \int_{0}^{\infty} \tilde{v}_\leadv^\top(s)  \tilde{v}_\leadv(s) ds \right\}, \gamma \leq 1.
\end{align}
this proves Lemma \ref{lem:L2}. The proof is complete.
\begin{remark}
Although Theorem~\ref{thm:thm1} guarantees stochastic $\mathcal{L}_2$ string stability whenever the LMIs are feasible, the synthesized gains may occasionally become excessively large due to numerical ill-conditioning. To address this, we impose an additional structural constraint $V_kR=0$, where $R$ spans the dominant directions extracted from $K_{\human,i}$. This regularization keeps the controller well-conditioned and prevents unrealistic intervention spikes, while maintaining the same stability guarantee.
\end{remark}

\section{Proof of Theorem~\ref{thm:mic_lmi}}\label{app:proof2}
\renewcommand\theequation{B.\arabic{equation}}
\setcounter{equation}{0}
We show that condition \eqref{eq:LMI1} is sufficient by proving it implies the dissipation inequality for the augmented output, and thus guarantees the performance bound in Proposition~\ref{prop:mic_perf}. As in Theorem~\ref{thm:thm1}, we consider the quadratic Lyapunov functional $V(\tilde{x}(t),Z(t))=\tilde{x}^\top(t)P(Z(t))\tilde{x}(t)$ with $P_{ik}=P(Z(t)=(i,k))>0$, and define $A_{ik}=A+B(K_{\human,i}+K_{\av,k})$ and $D_{ik}=D+B(D_{\human,i}+D_{\av,k})$. The generator evaluation is identical to Theorem \ref{thm:thm1} and is omitted, so we focus on the additional terms introduced by the augmented output \eqref{eq:aug_output}.

Compared with Theorem~\ref{thm:thm1}, the augmented output contributes the design weight $\beta$ through the blocks $\beta V_k$ and $\beta L_k$. The Markov jump coupling remains encoded by $\Pi_{ik}$ and $\mathcal D_{ik}$. We introduce matrices $X_{ik}>0$, $G_k$ and $V_k$ as in Theorem~\ref{thm:thm1}, and set $K_{\av,k}=V_k G_k^{-1}$ and $D_{\av,k}=L_k$. Substituting these into $A_{ik}$ and $D_{ik}$, and applying the same slack-variable lifting as in Theorem~\ref{thm:thm1}, and using Finsler's lemma to eliminate the explicit dependence on $G_k$ convert \eqref{eq:LMI1} into the following equivalent block inequality:
\begin{align}
\begin{bmatrix}
\nu_{(i,k)(i,k)}X_{ik}+\mathrm{Her}(X_{ik}A_{ik}^\top) & \Phi_{ik} & X_{ik}C^\top & \beta X_{ik}K_{\av,k}^\top & X_{ik}\Pi_{ik} \\
\Phi_{ik}^\top & -\gamma^2 I & 0 & \beta D_{\av,k}^\top & 0 \\
C X_{ik}^\top & 0 & -I & 0 & 0\\
\beta K_{\av,k}X_{ik}^\top & \beta D_{\av,k} & 0 & -I & 0\\
\Pi_{ik}^\top X_{ik}^\top & 0 & 0 & 0 & -\mathcal D_{ik}
\end{bmatrix}<0.
\end{align}
where $\Phi_{ik}$, $\Pi_{ik}$, and $\mathcal{D}_{ik}$ are consistent with \eqref{eq:LMI_con2} to \eqref{eq:LMI_con5}. 

Applying the Schur complement with respect to the negative definite sub-blocks and substituting $P_{ik}=X_{ik}^{-1}$ yield \eqref{eq:mic_preconvex}. The terms scaled by $\beta$ in the lifted inequality become quadratic in $\beta$ after this reduction, which reflects the fact that the automated-effort channel enters through cross products. The Markov jump coupling carried by $\Pi_{ik}$ and $\mathcal D_{ik}$ reduces to $\sum_{(j,l)}\nu_{(i,k)(j,l)}P_{jl}$ as in Theorem~\ref{thm:thm1}.
\begin{align}
\begin{bmatrix}
A_{ik}^\top  P_{ik} + P_{ik} A_{ik} + \sum_{(j,l)} \nu_{(i,k)(j,l)} P_{jl}+\beta^2 K_{\av,k}^\top K_{\av,k}& P_{ik} D_{ik} +\beta^2 K_{\av,k}^\top D_{\av,k}& C^\top\\
D_{ik}^\top P_{ik} +\beta^2 D_{\av,k}^\top K_{\av,k}& \beta^2 D_{\av,k}^\top D_{\av,k}-\gamma^2 I &0 \\
C & 0 & -I
\end{bmatrix} < 0.
\label{eq:mic_preconvex}
\end{align}

With the augmented output definition \eqref{eq:aug_output} and $u_{\av}=K_{\av,k}\tilde{x}+D_{\av,k}\tilde{v}_\leadv$, the term $\tilde{z}^\top\tilde{z}$ expands as
\begin{align}
\tilde{z}^\top \tilde{z}
= \tilde{x}^\top C^\top C \tilde{x} 
+ \beta^2 \tilde{x}^\top K_{\av,k}^\top K_{\av,k}\tilde{x}
+ \beta^2 \tilde{x}^\top K_{\av,k}^\top D_{\av,k}\tilde{v}_\leadv
+ \beta^2 \tilde{v}_\leadv^\top D_{\av,k}^\top K_{\av,k}\tilde{x}
+ \beta^2 \tilde{v}_\leadv^\top D_{\av,k}^\top D_{\av,k}\tilde{v}_\leadv.
\end{align}

Hence, the matrix inequality \eqref{eq:mic_preconvex} is equivalent to the dissipation condition
\begin{align}
\mathcal{L}V(\tilde{x}(t),(i,k)) + \tilde{z}^\top(t)\tilde{z}(t) - \gamma^2 \tilde{v}_\leadv^\top(t)\tilde{v}_\leadv(t) < 0, \label{eq:mic_diss}
\end{align}

Integrating \eqref{eq:mic_diss} over $[0,t]$, taking expectations with respect to the jump process, and letting $t\to\infty$ with $\tilde{x}(0)=0$ gives
\begin{align*}
\mathbb{E}\left\{\int_0^\infty \tilde{z}^\top(s)\tilde{z}(s)\,ds\right\}
\le \gamma^2 \mathbb{E}\!\left\{\int_0^\infty \tilde{v}_\leadv^\top(s)\tilde{v}_\leadv(s)\,ds\right\},
\end{align*}
which coincides with the performance condition \eqref{eq:H_infity_energy_mic}. This completes the proof.
\begin{remark}
Similar to the nominal controller case in Theorem~\ref{thm:thm1}, solving the LMIs in Theorem~\ref{thm:mic_lmi} may lead to numerically ill-conditioned gains. To improve robustness, the same regularization $V_kR=0$ is imposed, which ensures well-conditioned synthesis without affecting the theoretical guarantees.
\end{remark}

\section{Sensitivity Analysis of Estimation Parameters}\label{app:sensitivity}
\renewcommand\theequation{C.\arabic{equation}}
\setcounter{equation}{0}

This appendix examines the reliability of the proposed controller under errors in mode observation. The observation chain $\hat{\eta}(t)$ is governed by the parameters $\alpha$ and $q$. We vary them over $\alpha\in[0.05,0.30]$ and $q\in[0.02,0.30]~\text{s}^{-1}$, which cover moderate to severe levels of estimation noise and delay.

For each $(\alpha,q)$ pair, Monte Carlo simulations are conducted and three metrics are recorded: (i) the observation accuracy, defined as the fraction of time with $\hat{\eta}(t)=\eta(t)$; (ii) the average follow delay, i.e., the lag for $\hat{\eta}(t)$ to catch up after true mode switches (capped at 0.5 s); and (iii) the $\mathcal{L}_2$ gain $\gamma$, which certifies stochastic string stability of the closed-loop system.

\begin{figure}[t]
    \centering
    \subfigure[Accuracy]{
        \includegraphics[width=0.31\linewidth]{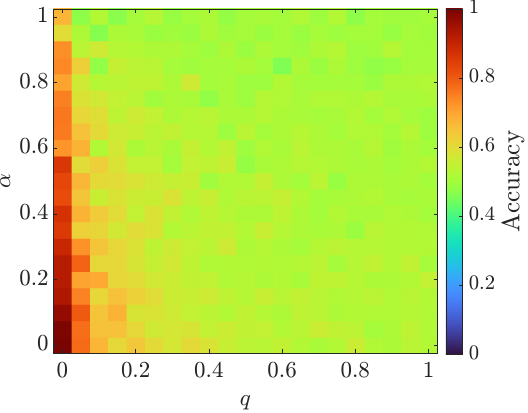}\label{app_fig:accuracy}
    }
    \subfigure[Delay]{
        \includegraphics[width=0.31\linewidth]{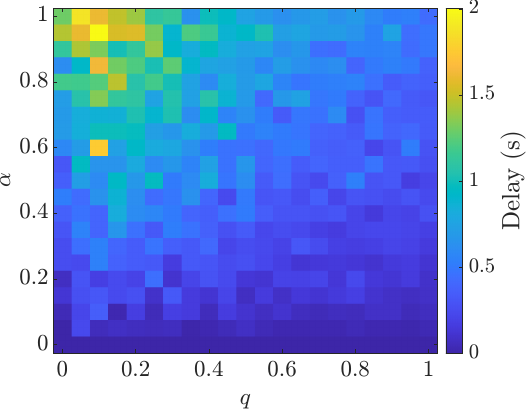}\label{app_fig:delay}
    }
    \subfigure[Stability]{
        \includegraphics[width=0.31\linewidth]{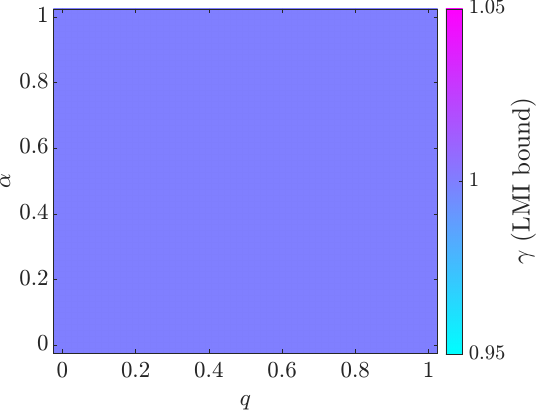}\label{app_fig:gamma}
    }
    \caption{Sensitivity analysis results.}
    \label{fig:sensitivity}
\end{figure}

The sensitivity results are shown in Fig.~\ref{fig:sensitivity}.
Fig.~\ref{app_fig:accuracy} indicates that observation accuracy is especially sensitive to $\alpha$: as $\alpha$ increases, accuracy drops significantly, whereas the influence of $q$ is comparatively minor.
Fig.~\ref{app_fig:delay} shows that the follow delay remains low when both $\alpha$ and $q$ are small, but increases with larger values; the delay is more sensitive to $\alpha$ than to $q$.
Fig.~\ref{app_fig:gamma} demonstrates that within the tested ranges, the LMIs are feasible and the synthesized controllers consistently yield $\mathcal{L}_2$ gains $\gamma$ close to $1$, ensuring string stability. In particular, when $\alpha=0$ and $q=0$, the observation becomes perfect with accuracy approaching $1$, delay nearly $0$, and  $\gamma$ approximately equal to $1$.

In summary, the controller remains feasible and stabilizing across the tested ranges of $\alpha$ and $q$, confirming reliable $\mathcal{L}_2$ string stability even under moderate observation errors.

\bibliographystyle{plain}
\bibliography{references.bib}
\end{document}